\theoremstyle{plain}
\newtheorem{thm}{Theorem}[section]
\newtheorem{lem}[thm]{Lemma}
\newtheorem{cor}[thm]{Corollary}
\newtheorem{prop}[thm]{Proposition}
\theoremstyle{definition}
\newtheorem{defn}[thm]{Definition}
\newtheorem{exa}[thm]{Example}
\newtheorem{rem}[thm]{Remark}
\numberwithin{equation}{section}
\numberwithin{equation}{section}
\def\sumi{\sum_{i=1}^n}
\def\sumz{\sum_{i=0}^n}
\def\sumj{\sum_{j=1}^n}
\def\a{\alpha}
\def\m{\mu}
\def\d{\delta}
\def\t{\tau}
\def\p{\pi}
\def\zh{{\hat z}}
\def\fh{{\hat f}}
\def\g{\gamma}
\def\s{\sigma}
\def\D{\Delta}
\def\F{{\mathcal F}}
\def\S{{\mathbf S}}
\def\DD{{\mathbf D}}
\def\SS{{\widetilde{\mathbf S}}}
\def\V{{\mathbf V}}
\def\Vt{{\mathbf{\widetilde V}}}
\def\Vh{{\mathbf{\widehat V}}}
\def\U{{\widetilde U}}
\def\ph{\varphi}
\def\1{{\mathbbm 1}}
\def\N{{\mathbb N}}
\def\R{{\mathbb R}}
\def\eqdef{\triangleq}
\def\intt{\int_0^t}
\def\sumi{\sum_{i=1}^n}
\def\sumij{\sum_{i,j=1}^n}
\def\half{\frac{1}{2}}
\def\brac#1{\langle #1\rangle}
\def\bbrac#1{\big\langle #1 \big\rangle}
\def\ito{It\^o}
\def\intt{\int_0^t}
\def\dd#1#2{\frac{\partial #1}{\partial #2}}
\begin{document}

\centerline{\bf \LARGE{Portfolios generated by contingent claim functions,}}
\vspace{10pt}
\centerline{\bf \LARGE{with applications to option pricing}}

\vskip 35pt

\centerline{\large{Ricardo T. Fernholz\footnote[1]{Claremont McKenna College, 500 E. Ninth St., Claremont, CA 91711, rfernholz@cmc.edu.} \hskip 60pt Robert Fernholz\footnote[2]{Allocation Strategies, LLC, Princeton, NJ, bob@bobfernholz.com.}}}

\vskip 35pt

\centerline{\large{\today}}

 \vspace{-15pt}

\vskip 60pt

\begin{abstract}
This paper presents a synthesis of the theories of portfolio generating functions and option pricing. The theory of portfolio generation is extended to measure the value of portfolios generated by positive $C^{2,1}$  functions of asset prices $X_1,\ldots,X_n$ directly, rather than with respect to a numeraire portfolio. If a portfolio generating function satisfies a specific partial differential equation, then the value of the portfolio generated by that function will replicate the value of the function. This differential equation is a general form of the Black-Scholes equation. Similar results apply to contingent claim functions, which are portfolio generating functions that are homogeneous of degree~1. With the addition of a riskless asset, an inhomogeneous portfolio generating function $\V\colon\R^{+n}\times[0,T]\to\R^+$ can be extended to an equivalent contingent claim function $\Vh\colon\R^+\times\R^{+n}\times[0,T]\to\R^+$ that generates the same portfolio and is replicable if and only if $\V$ is replicable. Several examples are presented.
\end{abstract}

\vskip 240pt

\noindent{\em MSC 2020 Subject Classifications:} 60H30, 91G10, 91G20. {\bf JEL Codes:} G11, G13, C58.

\vskip 5pt

\noindent {\bf Keywords:} Stochastic portfolio theory, functionally generated portfolios, contingent claims, option pricing, homogeneous functions, Black-Scholes equation

\vfill

\section{Introduction} \label{intro} 

Functionally generated portfolios were introduced in \citet{F:pgf,F:2002} as one of the central concepts of stochastic portfolio theory. Option-pricing theory began with the seminal paper of \citet{Black/Scholes}, who used stochastic calculus to price European options on financial assets. Here we present a synthesis of portfolio generation and option pricing. We show that portfolio generation has a natural application to option pricing, while option pricing can be considered an early example of portfolio generation (we note that the remark before Problem~3.2.4 of \citet{F:2002} suggested such a relationship).

Functionally generated portfolios were originally defined for stock markets, and measured the value of a portfolio versus the market portfolio as numeraire \citep{F:pgf}. Later, functional generation with more general numeraire portfolios was developed \citep{Strong:2014}, and portfolios generated by functions of ranked market weights appeared in \citet{F:rank}. Functionally generated portfolios were originally defined multiplicatively, and later the theory was extended to additive portfolio generation \citep{Karatzas/Ruf:2017}. Portfolios generated by functions that depend on the sample paths of the assets were introduced by \citet{SSV:2018} and \citet{KKi:2020}, and further extended to path-functional portfolios by \citet{Cuchiero-Moller:2024}. Portfolio generating functions have been used for a number of different purposes, from explaining the size effect for stocks \citep{Banz:1981,F:2002} to proving the existence of arbitrage in realistic models of stock markets (see, e.g., \citet{Fernholz/Karatzas/Kardaras:2005}, \citet{FKR:2018}, or \citet{KK:2021}). 

The modern theory of option pricing started with \citet{Black/Scholes}, who derived the well-known Black-Scholes differential equation and then solved this equation for the price of European put and call options. This theory was advanced by \citet{Merton:1973}, who developed the concept of a dynamically rebalanced portfolio to replicate the prices of options following the Black-Scholes model (see, e.g., the discussion of ``perfect substitutability" and ``perfect hedge" on page~169 of \citet{Merton:1973}). The early theory of option pricing is reviewed by \citet{Smith:1976}, and option-pricing theory has continued to develop with multivariate versions of the Black-Scholes model (see, e.g., \citet{Carmona2006} and \citet{Guillaume2019}) as well as results for wider classes of options and price processes (see, e.g., \citet{HN:2001,HN2:2001} and   \citet{RSS:2007}).

\citet{Merton:1973} recognized that a natural property of option-pricing functions is that they are {\em homogeneous of degree~1}. This homogeneity requires that a change of numeraire currency affects both asset prices and the value of a contingent claim in the same way. For example, if we switch from measuring asset prices in euros rather than dollars, then the same exchange rate should be applied to the value of a contingent claim as to the prices of the underlying assets. \citet{HN:2001,HN2:2001} argue that scale invariance is a fundamental property of any well-posed market model, and this leads naturally to homogeneity of degree 1 for all contingent claims. 

Here we use multiplicative portfolio generation by functions of asset prices rather than market weights, and we measure the value of portfolios directly, with no numeraire portfolio. We show that a positive $C^{2,1}$ function of asset prices and time will generate a portfolio and that the value of this portfolio can be decomposed into the generating function plus a drift process, in a manner similar to portfolio value decomposition with the market as numeraire \citep{F:2002}. 

We establish necessary and sufficient conditions for a portfolio generating function to be \emph{replicable}, so that the value of the portfolio generated by that function replicates the value of the function. These conditions require that the generating function satisfy a partial differential equation that is a form of the Black-Scholes equation. We apply these results to contingent claim functions, defined to be portfolio generating functions that are homogeneous of degree~1. Because any contingent claim function that satisfies this differential equation is replicable, it follows that any such function is equal to the value of an option with expiration payout equal to the terminal value of that function. Furthermore, our results establish that the replicating portfolio for this option is the portfolio generated by the contingent claim function. In this way we show that central concepts of two branches of mathematical finance --- stochastic portfolio theory and option-pricing theory --- are closely related. 

Although homogeneity may have ample economic justification in the context of option pricing, there is no mathematical need for it in portfolio generation. Accordingly, we show that a positive $C^{2,1}$ function of the asset prices and time will generate a portfolio whether or not it is homogeneous of degree~1. We establish a three-step procedure to convert an inhomogeneous portfolio generating function $\V\colon\R^{+n}\times[0,T]\to\R^+$  into an equivalent contingent claim function $\Vh\colon\R^+\times\R^{+n}\times[0,T]\to\R^+$ that generates the same portfolio and is replicable if and only if $\V$ is replicable. We apply this procedure to a model with a single asset with constant parameters, and we show that the general differential equation for a replicable generating function is identical to the original Black-Scholes equation, and the solution to this general differential equation will be the same as in \citet{Black/Scholes}. Our results continue to hold in the multivariate settings of  \citet{HN:2001,HN2:2001}, \citet{Carmona2006}, and \citet{Guillaume2019}.

\section{Setup} \label{setup} 
 
We consider  $n\ge1$ assets  represented by  {\em price processes} $X_1,\ldots,X_n$, which are strictly positive continuous semimartingales $X_i$ defined on $[0,T]$, for $T>0$, that satisfy 
\begin{equation}\label{1.0}
d\log X_i(t) = \g_i(t) dt+\sum^d_{\ell=1}\zeta_{i,\ell}(t) dW_\ell(t),
\end{equation}
for $i=1,\ldots,n$, $t\in[0,T]$, and $d\ge n$, where $W=(W_1,\ldots,W_d)$  is a Brownian motion, the processes $\g_i$ are integrable, the processes $\zeta_{i,\ell}$ are square integrable, and both the $\g_i$ and $\zeta_{i,\ell}$ are adapted with respect  to the Brownian filtration $\F^W$.   The $\g_i$ are called {\em growth rate processes}. To simplify notation, we adopt the convention that all equations involving random variables or processes are with probability one under the relevant probability measure.   For a market to be {\em complete}, it is necessary that $n = d$ in~\eqref{1.0}, in which case certain hedging properties are known to hold (see \citet{Karatzas/Shreve:1998}, Section~1.6, or \citet{KK:2021}, Section~3.3). Here we are not interested in market completeness, and except in examples we shall not specify whether $d>n$ or $d=n$ in~\eqref{1.0}.  

The price processes $X_i$  are valued in terms of a fixed numeraire currency with value identically 1 at all times, so $X_i(t)$ represents the price of the $i$th asset at time $t$ in units of this currency. The numeraire currency serves only as a unit of account and is not a tradable asset.  A {\em riskless asset}  is a price process that is of finite variation, in which case \eqref{1.0} becomes
\begin{equation}\label{1.1a}
d\log X_0(t) = \g_0(t)dt,
\end{equation}
for $t\in[0,T]$, and  $\g_0$ is called the {\em interest rate process} for $X_0$. Unless otherwise stated, we assume neither the existence nor the non-existence of one or more riskless assets. Since the numeraire currency serves only as a unit of account, the closest one can come to it in a portfolio is through the use of a riskless asset, if there is one.  

We represent the cross variation process for $\log X_i$ and $\log X_j$ by $\brac{\log X_i,\log X_j}_t$, and define the {\em covariance processes}  $\s_{ij}$ such that
\[
\s_{ij}(t)\eqdef\sum_{\ell=1}^d \zeta_{i,\ell}(t)\zeta_{j,\ell}(t)=\frac{d\brac{\log X_i,\log X_j}_t}{dt},
\]
for $i,j=1,\ldots,n$ and $t\in[0,T]$. For a riskless asset $X_0$, the covariance processes satisfy $\s_{i0}=\s_{0j}\equiv0$, for $i,j=0,1,\ldots,n$. The notation $\s^2_i$ may sometimes be used for $\s_{ii}$. Unless otherwise stated, we make no assumptions  regarding the rank of the matrix $\big(\s_{ij}(t)\big)_{\{i,j=1,\ldots,n\}}$ or regarding the existence of arbitrage. We sometimes use the vector notation $X=(X_1,\ldots,X_n)$, or with a riskless asset $X_0$, the notation $(X_0,X)=(X_0,X_1,\ldots,X_n)$.

A portfolio $\p$ of price processes $X_1,\ldots,X_n$ and riskless asset $X_0$ is identified by its {\em weight processes}, or {\em weights,} $\p_1,\ldots,\p_n$ and $\p_0=1-\p_1-\cdots-\p_n$, which are bounded  $\F^W$\!\!-adapted processes. For a portfolio $\p$ and price processes $X_0,X_1,\ldots,X_n$, the {\em portfolio value process} $Z_\p$ will satisfy
\begin{equation}\label{1.1.1}
dZ_\p(t) \eqdef Z_\p(t)\sumz\p_i(t)\frac{dX_i(t)}{X_i(t)},
\end{equation}
for $t\in[0,T]$. This proportional definition of the portfolio value process ensures that the portfolio is {\em self-financing,} since it directly measures the change in portfolio value induced proportionally by the changes in the prices of the component assets (see \citet{F:2002}, equation~(1.1.12), or \citet{KK:2021}, Section~1.3).
 
In logarithmic form, \ito's rule implies that~\eqref{1.1.1} becomes
\begin{equation}\label{1.2}
d\log Z_\p(t)=\sumz\p_i(t)\,d\log X_i(t)+\g^*_\p(t)dt,
\end{equation}
for $t\in[0,T]$, with the {\em excess growth rate process}
\begin{equation}\label{1.3}
\g^*_\p(t)\eqdef\half\bigg(\sumi\p_i(t)\s_{ii}(t)-\sumij\p_i(t)\p_j(t)\s_{ij}(t)\bigg),
\end{equation}
for $t\in[0,T]$. Note that the summation here starts with $i=1$ since $\s_{i0}=\s_{0j}\equiv0$. Details regarding this logarithmic representation for portfolios can be found in \citet{F:2002}, Section~1.1. 

We use partial differentiation of $C^{2,1}$ functions of the form $f\colon \R^{+n}\times[0,T]\to \R^+$, and we use the notation $D_i f$, for $i=1,\ldots,n$, for the partial derivative of $f$ with respect to the $i$th argument of $f$, and $D_t f$ for the partial derivative of $f$ with respect to the last argument, which represents time $t\in[0,T]$. In order to avoid confusion when we consider derivatives of composed functions $f\circ g$, we sometimes use classical notation for partial derivatives. For example, for $C^1$ functions $f\colon\R^{+n}\to\R^+$ and $g=(g_1,\ldots,g_n)\colon\R^+\hspace{0pt}^m\to\R^{+n}$ the partial derivative $D_i(f\circ g)(x)$ with respect to $x_i$, for $x=(x_1,\ldots,x_m)\in\R^+\hspace{0pt}^m$, will be written with the notation
\begin{equation}\label{1.6}
\dd{}{x_i}\big(f(g(x)\big)= D_i(f\circ g)(x) =\sumj D_j f(g(x))D_ig_j(x),
\end{equation}
for $i=1,\ldots,m$, by the chain rule. Finally, we use the Kronecker delta notation, $\d_{ij}\eqdef\1_{i=j}$, for $i,j\in\N$.

\section{Portfolio generating functions}\label{generating}

Portfolio generating functions first appeared in \citet{F:pgf} as positive continuous functions defined on the open unit simplex $\D^n\eqdef\{x\in\R^{+n}\colon x_1+\cdots+x_n=1\}$. In that setting, $X_1,\ldots,X_n$ represented the capitalizations of companies rather than prices, and they were defined to follow the same logarithmic representation~\eqref{1.0}. There was no riskless asset, and the $X_i$  comprised a {\em market} $\{X_1,\ldots,X_n\}$, with the {\em market capitalization} represented by
\begin{equation}\label{1.5}
Z(t)\eqdef X_1(t)+\cdots+X_n(t),
\end{equation}
for  $t\in[0,T]$. The {\em market portfolio} $\m$ was defined by its weight processes 
\begin{equation}\label{1.4}
\m_i(t)\eqdef X_i(t)/Z(t),
\end{equation}
for $i=1,\ldots,n$ and $t\in[0,T]$, so $\m(t)\in\D^n$,  for $t\in[0,T]$, and its value process $Z_\m$, defined as in~\eqref{1.1.1}, was shown to satisfy 
\[
d\log Z_\m(t)=d\log Z(t),
\]
 for $t\in[0,T]$. 
 
Time dependent portfolio generating functions were introduced in \citet{F:2002}, Section~3.2, in which a continuous function $\S\colon\D^n\times[0,T]\to\R^+$ was defined to  generate a portfolio $\p$ if
\begin{equation}\label{3.10}
d\log\big(Z_\p(t)/Z_\m(t)\big)=d\log\S(\m(t),t)+d\Theta_\S(t),
\end{equation}
for $t\in[0,T]$, where $\Theta_\S$ is a process of finite variation with $\Theta_\S(0)=0$.  The process $\Theta_\S$ was called the {\em drift process} for $\S$. It was shown that if $\S\colon U\times[0,T]\to\R^+$ is $C^{2,1}$ for an open neighborhood $U$ of $\D^n\subset\R^n$, with $x_i D_i\log\S(x,t)$ bounded on $\D^n\times[0,T]$, for $i=1,\ldots,n$,  then $\S$ generates the portfolio $\p$ with weights
\begin{equation}\label{3.9}
\p_i(t)=\Big(D_i\log\S(\m(t),t)+1-\sumj\m_j(t)D_j\log\S(\m(t),t)\Big)\m_i(t),
\end{equation}
for $i=1\ldots,n$ and $t\in[0,T]$, and $\Theta_\S$ given by
\begin{equation}\label{3.11}
d\Theta_\S(t)=-\half\sumij\frac{D_{ij}\S(\m(t),t)}{\S(\m(t),t)}\m_i(t)\m_j(t)\t_{ij}(t)dt-\frac{D_t\S(\m(t),t)}{\S(\m(t),t)}dt,
\end{equation}
for $t\in[0,T]$, where $\t_{ij}(t)dt=d\brac{\log\m_i,\log\m_j}_t$ (see  \citet{F:2002}, Section~3.2).  

In the setting of~\eqref{3.10}, the market portfolio $\m$ served as the {\em numeraire portfolio;} more recently, \citet{Strong:2014} generalized this definition to the case where  an arbitrary portfolio $\nu$ replaced the market portfolio $\m$ as numeraire.  Here we define portfolio generating functions in terms of prices rather than market weights and we measure the value of portfolios directly, with no numeraire portfolio.

\begin{defn}\label{D3} Let $\V\colon\R^{+n}\times[0,T]\to\R^+$ be a continuous function, let $X_1,\ldots,X_n$ be price processes, and let $X_0$ be a riskless asset. Then $\V$ {\em generates} the portfolio $\p$ for  $X_0,X_1,\ldots,X_n$ if
\begin{equation}\label{3.4}
d\log Z_\p(t)  = d\log \V(X(t),t)  + d\Phi_\V(t),
\end{equation}
 for $t\in[0,T)$, where $\Phi_\V$ is a process of finite variation with $\Phi_\V(0)=0$.  If $\p_0\equiv0$, then $\V$ generates $\p$ for $X_1,\ldots,X_n$. The process $\Phi_\V$ is called the {\em drift process} for $\V$. 
 \end{defn}
 
This is the definition for {\em multiplicative} portfolio generation by functions of asset prices, and this requires that portfolio values be strictly positive. {\em Additive} portfolio generation was introduced in \citet{Karatzas/Ruf:2017}, in which case trading strategies can assume negative values. Also {\em path-dependent} portfolio generation was proposed in  \citet{SSV:2018} and \citet{KKi:2020}, and this was further extended by \citet{Cuchiero-Moller:2024} to path-functional portfolios. All the results we present here might be similarly extendable.

We shall consider portfolios generated by differentiable functions of the asset prices, but portfolios can be generated by non-differentiable functions as well. Here we present a simple example of  such a function (see also  \citet{F:rank}).

\begin{exa}\label{LT} {\em A portfolio generated by a non-differentiable function.} Let 
\[
\V(x_1,x_2,t)=x_1\lor x_2,
\]
for $(x_1,x_2,t)\in\R^{+2}\times[0,T]$. Then $\V$ generates the portfolio $\p$ for  $X_1,X_2$ with weights 
\[
\p_1(t)=\1_{X_1\ge X_2}(t)\quad\text{ and }\quad \p_2(t)=1-\p_1(t),
\]
for  $t\in[0,T]$, and drift process
\[
\Phi_\V(t)=-\Lambda_{\log (X_1/X_2)}(t), 
\]
for $t\in[0,T]$, where $\Lambda_{\log (X_1/X_2)}$ is the semimartingale local time at 0 for $\log (X_1/X_2)$ (see \citet{F:2002}, Lemma~4.1.10, regarding this example and  \citet{Karatzas/Shreve:1991}, Section~3.7, regarding semimartingale local time).  \qed
\end{exa}

\begin{defn}\label{PG1} A {\em portfolio generating function} is a continuous function $\V\colon\R^{+n}\times [0,T]\to\R^+$ that is $C^{2,1}$ on $\R^{+n}\times [0,T)$ such that the functions $x_i D_i \V(x,t)/\V(x,t)$, for $i=1,\ldots,n$,  are bounded for $(x,t)\in\R^{+n}\times[0,T)$. The {\em terminal value function} for the portfolio generating function  $\V$ is the function $\V(\,\cdot\,,T)\colon\R^{+n}\to\R^+$. 
\end{defn}

This definition requires that portfolio generating functions be strictly positive, which could be an impediment to option-pricing applications, since option values can drop to 0 at expiration. However, in Section~\ref{BS} below we show that in practice this apparent difficulty can be overcome. Furthermore, the additive portfolio generation of  \citet{Karatzas/Ruf:2017} might enable direct treatment of non-positive portfolio-generating functions, without the workaround that we use in  Section~\ref{BS}.

The following theorem lays the foundation for the rest of this paper.

\begin{thm}\label{T1} Let $\V\colon\R^{+n}\times[0,T]\to\R^+$ be a portfolio generating function, let  $X_1,\ldots,X_n$ be  price processes, and let $X_0$ be a riskless asset. Then $\V$ generates the portfolio $\p$ for $X_0,X_1,\ldots,X_n$ with weights
\begin{equation}\label{3.5}
\p_i(t) = \frac{X_i(t)D_i\V(X(t),t)}{\V(X(t),t)},
\end{equation}
for $i=1,\ldots,n$ and $t\in[0,T)$, and
\begin{equation}\label{3.5a}
\p_0(t)=1-\sumi\p_i(t),
\end{equation}
for $t\in[0,T)$, and with the drift process given by
\begin{equation}\begin{split}\label{3.6}
d\Phi_\V(t)= -\half&\sumij \frac{D_{ij}\V(X(t),t)}{\V(X(t),t)}X_i(t)X_j(t)\s_{ij}(t)dt\\&-\frac{D_t\V(X(t),t)}{\V(X(t),t)}dt+\g_0(t)\bigg(1-
\sumi\frac{X_i(t)D_i\V(X(t),t)}{\V(X(t),t)}\bigg)dt,
\end{split}\end{equation}
for $t\in[0,T)$, where $\g_0$ is the interest rate process for $X_0$.
\end{thm}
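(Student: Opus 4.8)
The plan is to verify Definition~\ref{D3} directly: compute $d\log Z_\p(t)$ for the candidate weights~\eqref{3.5}--\eqref{3.5a} using the logarithmic portfolio decomposition~\eqref{1.2}--\eqref{1.3}, expand $d\log\V(X(t),t)$ by \ito's rule, and then read off $\Phi_\V$ as the difference of the two, checking along the way that this difference is absolutely continuous and matches~\eqref{3.6}. First I would confirm that~\eqref{3.5} defines an admissible portfolio at all: Definition~\ref{PG1} forces $x_iD_i\V(x,t)/\V(x,t)$ to be bounded on $\R^{+n}\times[0,T)$, so each $\p_i$ is a bounded $\F^W$-adapted process and $\p_0$ is fixed by~\eqref{3.5a}, which is exactly what is needed for~\eqref{1.2} to apply.

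The heart of the argument is one \ito\ expansion. Writing $F\eqdef\log\V$, so that $D_iF=D_i\V/\V$ and $D_{ij}F=D_{ij}\V/\V-(D_i\V)(D_j\V)/\V^2$, \ito's rule gives
\[
d\log\V(X(t),t)=\sumi\frac{D_i\V}{\V}\,dX_i+\frac{D_t\V}{\V}\,dt+\half\sumij D_{ij}F\,d\brac{X_i,X_j}_t.
\]
Two conversions read off from~\eqref{1.0} then do all the work: $dX_i/X_i=d\log X_i+\half\s_{ii}\,dt$ and $d\brac{X_i,X_j}_t=X_iX_j\s_{ij}\,dt$. Substituting the definition~\eqref{3.5} of $\p_i$ turns the first-order term into $\sumi\p_i\,d\log X_i+\half\sumi\p_i\s_{ii}\,dt$, while the second-order term splits into a piece $\half\sumij(D_{ij}\V/\V)X_iX_j\s_{ij}\,dt$ from $D_{ij}\V/\V$ and a piece $-\half\sumij\p_i\p_j\s_{ij}\,dt$ from $(D_i\V)(D_j\V)/\V^2$, again using~\eqref{3.5}.

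On the other side, \eqref{1.2}--\eqref{1.3} together with $d\log X_0=\g_0\,dt$ and $\p_0=1-\sumi\p_i$ give
\[
d\log Z_\p=\sumi\p_i\,d\log X_i+\p_0\g_0\,dt+\half\Big(\sumi\p_i\s_{ii}-\sumij\p_i\p_j\s_{ij}\Big)dt.
\]
Subtracting $d\log\V$ from this, the stochastic integrals $\sumi\p_i\,d\log X_i$ cancel exactly, as do the two $\half\sumi\p_i\s_{ii}\,dt$ terms and the two $-\half\sumij\p_i\p_j\s_{ij}\,dt$ terms, leaving only the absolutely continuous remainder
\[
d\Phi_\V=\p_0\g_0\,dt-\frac{D_t\V}{\V}\,dt-\half\sumij\frac{D_{ij}\V}{\V}X_iX_j\s_{ij}\,dt.
\]
Expanding $\p_0\g_0=\g_0\big(1-\sumi X_iD_i\V/\V\big)$ reproduces~\eqref{3.6} verbatim; since the right-hand side is a $dt$-integral it is of finite variation and vanishes at $t=0$, so $\V$ generates $\p$ in the sense of Definition~\ref{D3}.

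I do not expect a genuine obstacle, since this is a direct \ito\ computation; the only delicate points are bookkeeping ones. One must keep the riskless index $i=0$ separate so that the covariance sums run from $i=1$ (using $\s_{i0}=\s_{0j}\equiv0$), track exactly which terms cancel against which, and invoke the boundedness hypothesis of Definition~\ref{PG1} both to make the weights admissible and to ensure every $dt$-integrand is well defined on $[0,T)$, where the $C^{2,1}$ regularity of $\V$ legitimizes the application of \ito's rule.
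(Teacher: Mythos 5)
Your proposal is correct and follows essentially the same route as the paper's own proof: an \ito\ expansion of $\log\V(X(t),t)$ (the paper works with $D_i\log\V$ directly, you with $F=\log\V$, which is the same computation), conversion of $dX_i$ and $d\brac{X_i,X_j}_t$ into $d\log X_i$ and $\s_{ij}\,dt$ terms, identification of the weights~\eqref{3.5}--\eqref{3.5a} and the excess growth rate~\eqref{1.3}, and reading off the drift~\eqref{3.6} as the finite-variation difference. The only difference is presentational --- you compute $d\log Z_\p$ and $d\log\V$ separately and subtract, while the paper rewrites $d\log\V$ in one chain of equalities ending in $d\log Z_\p(t)-d\Phi_\V(t)$ --- so there is nothing of substance to distinguish the two.
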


\begin{proof} By \ito's rule,
\begin{align}
d\log \V(X(t),t) &= \sumi D_i\log\V(X(t),t) dX_i(t)+D_t\log\V(X(t),t)dt +\half\sumij D_{ij}\log\V(X(t),t)d\bbrac{X_i,X_j}_t\notag\\
 \begin{split}&=\sumi X_i(t)D_i\log\V(X(t),t) d\log X_i(t)+\half\sumi  X_i(t)D_i\log\V(X(t),t)\s_{ii}(t)dt  \\
&\qquad+D_t\log\V(X(t),t)dt+\half\sumij X_i(t)X_j(t)D_{ij}\log\V(X(t),t)\s_{ij}(t)dt,
\end{split}\label{3.8}
\end{align}
for $t\in[0,T)$, where we applied \ito's rule again for $d\log X_i(t)$ in~\eqref{3.8}. In $D_i\log\V=D_i(\log \circ\V)$, etc., the partial derivatives are applied to the composition of $\log$ and $\V$.

The $\p_i$ defined by~\eqref{3.5} do not necessarily add up to 1, so to define a portfolio we need to use $\p_0$ from~\eqref{3.5a}, and with this portfolio $\p=(\p_0,\p_1,\ldots,\p_n)$ equation~\eqref{3.8} becomes
\begin{align*}
d\log \V(X(t),t) &=\sumi \p_i(t) d\log X_i(t)+\half\sumi \p_i(t)\s_{ii}(t)dt+\frac{D_t\V(X(t),t)}{\V(X(t),t)}dt \\
&\qquad+\half\sumij \frac{D_{ij}\V(X(t),t)}{\V(X(t),t)}X_i(t)X_j(t)\s_{ij}(t)dt-\half\sumij\p_i(t)\p_j(t)\s_{ij}(t)dt\\
&=\sumi \p_i(t) d\log X_i(t)+\g^*_\p(t)dt
+\frac{D_t\V(X(t),t)}{\V(X(t),t)}dt \\
&\qquad+\half\sumij \frac{D_{ij}\V(X(t),t)}{\V(X(t),t)}X_i(t)X_j(t)\s_{ij}(t)dt\\
&=d\log Z_\p(t)-d\Phi_\V(t),
\end{align*}
for $t\in[0,T)$, as in~\eqref{1.2}, \eqref{1.3}, and~\eqref{3.6}, and this satisfies~\eqref{3.4}.
\end{proof}

This theorem shows that a portfolio generating function will generate a portfolio as long as a riskless asset $X_0$ is also included in the portfolio. In the next section we consider a restricted class of portfolio generating functions that do not require the addition of a riskless asset.

\section{Contingent claim functions} \label{contingent}

\citet{Merton:1973} argued that from an economic perspective, homogeneity of degree~1 is a natural property of option-pricing functions  (see \citet{Merton:1973}, Theorem~6). This argument states that a change in the numeraire currency should change the value of an option by the same exchange rate as it changes the values of the assets in the replicating portfolio, so the option value function must be homogeneous of degree~1. \citet{HN:2001,HN2:2001} assert that scale invariance is a fundamental property of any well-posed market model, so that the prices of all tradable assets, including options, will scale equivalently. They show that this would require that option-pricing functions be homogeneous of degree 1, and they use this property to show that the trading strategies used for hedging contingent claims are self-financing (see \citet{HN:2001}, Section~2.3, equations~(2.15) and (2.16)). 

Since self-financing of portfolios is automatic with the proportional definition of portfolio value in~\eqref{1.1.1}, and since the previous section showed that portfolio generating functions need not be homogeneous, we find that there is no mathematical need for homogeneity. Nevertheless, the prevalence of homogeneity in option-pricing theory provides ample rationale for us to investigate its theoretical ramifications. We first need a couple of definitions.

\begin{defn}\label{DH1}
For integers $k\ge0$ and $n\ge1$, a function $V\colon\R^{+n}\to\R^+$ is {\em homogeneous of degree~$k$} if for $c\in\R^+$ and $x\in \R^{+n}$, $V(cx)=c^kV(x)$. A function $V\colon\R^{+n}\times[0,T]\to\R^+$ is {\em homogeneous of degree~$k$} if for $c\in\R^+$, $x\in \R^{+n}$, and $t\in[0,T]$, $V(cx,t)=c^kV(x,t)$.
\end{defn}

\begin{defn}\label{D5.0} 
A {\em contingent claim function}  $\V\colon\R^{+n}\times [0,T]\to\R^+$  is a portfolio generating function that is homogeneous of degree~1. 
\end{defn}

Like Definition \ref{PG1}, this definition restricts contingent claim functions to those with values in $\R^+$, and this excludes  common put and call pricing functions since their terminal values can be 0. However, this apparent exclusion can be accommodated by modifying the terminal value function, as we show below in Section~\ref{BS}.

The following lemma, sometimes referred to as Euler's Theorem for Homogeneous Functions, allows us to explore some of the implications of homogeneity. 

\begin{lem}\label{L1} For an integer $k\ge0$, a $C^1$ function $V\colon\R^{+n}\to\R^+$  is homogeneous of degree~$k$ if and only if 
\begin{equation}\label{2.4}
kV(x)=\sumi x_i D_i V(x),
\end{equation}
for $x\in\R^{+n}$.
\end{lem}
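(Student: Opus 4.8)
The plan is to prove Euler's theorem for homogeneous functions by establishing both implications separately, using differentiation of the defining relation $V(cx)=c^kV(x)$ with respect to the scaling parameter $c$.

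For the forward direction, suppose $V$ is homogeneous of degree~$k$, so that $V(cx)=c^kV(x)$ holds for all $c\in\R^+$ and $x\in\R^{+n}$. The key idea is to differentiate both sides with respect to $c$. On the right-hand side this gives $kc^{k-1}V(x)$. On the left-hand side, writing $g(c)\eqdef V(cx)$ and applying the chain rule as in~\eqref{1.6}, one obtains $\frac{d}{dc}V(cx)=\sumi x_i D_i V(cx)$, since the $i$th argument of $V$ is $cx_i$ whose derivative in $c$ is $x_i$. Equating the two expressions yields $\sumi x_i D_i V(cx)=kc^{k-1}V(x)$ for all $c>0$. Setting $c=1$ then gives $\sumi x_i D_i V(x)=kV(x)$, which is exactly~\eqref{2.4}.

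For the converse, suppose~\eqref{2.4} holds for all $x\in\R^{+n}$. Fix $x$ and define $\ph(c)\eqdef V(cx)$ for $c\in\R^+$; the plan is to show that $\ph(c)=c^kV(x)$ by deriving and solving an ordinary differential equation for $\ph$. Differentiating as above gives $\ph'(c)=\sumi x_i D_i V(cx)$. Now apply~\eqref{2.4} at the point $cx$ in place of $x$: since $\sumi (cx_i) D_i V(cx)=kV(cx)$, dividing by $c>0$ gives $\sumi x_i D_i V(cx)=\frac{k}{c}V(cx)=\frac{k}{c}\ph(c)$. Hence $\ph$ satisfies the linear ODE $\ph'(c)=\frac{k}{c}\ph(c)$, whose solutions are $\ph(c)=Ac^k$ for a constant $A$. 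Evaluating at $c=1$ gives $A=\ph(1)=V(x)$, so $\ph(c)=c^kV(x)$, i.e.\ $V(cx)=c^kV(x)$, establishing homogeneity of degree~$k$.

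I do not expect a serious obstacle here, as both directions reduce to the chain rule and an elementary ODE. The one point requiring mild care is the converse: one must apply the hypothesis~\eqref{2.4} not at the fixed $x$ but at the scaled point $cx\in\R^{+n}$ (which remains in the domain since $c>0$), and then divide by $c$ to convert the identity into the differential equation for $\ph$. Solving $\ph'=\frac{k}{c}\ph$ can be done cleanly by observing that $\frac{d}{dc}\big(c^{-k}\ph(c)\big)=c^{-k}\ph'(c)-kc^{-k-1}\ph(c)=c^{-k-1}\big(c\ph'(c)-k\ph(c)\big)=0$, so $c^{-k}\ph(c)$ is constant on $\R^+$; this avoids any appeal to logarithms and handles $k=0$ uniformly.
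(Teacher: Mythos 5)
Your proof is correct. Note that the paper gives no proof of Lemma~\ref{L1} at all: it is invoked as the classical Euler theorem for homogeneous functions, so there is no argument of the paper's to compare against. Your write-up is the standard proof --- differentiate $V(cx)=c^kV(x)$ in $c$ and set $c=1$ for one direction; for the converse, apply \eqref{2.4} at the scaled point $cx$ to obtain $\ph'(c)=\tfrac{k}{c}\ph(c)$ for $\ph(c)\eqdef V(cx)$ and conclude that $c^{-k}\ph(c)$ is constant --- and the two points you flag (invoking the hypothesis at $cx$ rather than at $x$, and solving the ODE without logarithms so that $k=0$ needs no special case) are exactly the details that make it airtight.
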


\begin{prop}\label{C0} Let $\V\colon\R^{+n}\times[0,T]\to\R^+$  be a contingent claim function  and let $X_1,\ldots,X_n$ be  price processes. Then $\V$  generates the portfolio  $\p$  for $X_1,\ldots,X_n$ with weights
\begin{equation}\label{3.15}
\p_i(t) = \frac{X_i(t)D_i\V(X(t),t)}{\V(X(t),t)},
\end{equation}
for $i=1,\ldots,n$ and  $t\in[0,T)$, and with  drift process given by
\begin{equation}\label{3.16}
d\Phi_\V(t)= -\half\sumij \frac{D_{ij}\V(X(t),t)}{\V(X(t),t)}X_i(t)X_j(t)\s_{ij}(t)dt-\frac{D_t\V(X(t),t)}{\V(X(t),t)}dt,
\end{equation}
for $t\in[0,T)$.
\end{prop}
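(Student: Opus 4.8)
The plan is to derive Proposition~\ref{C0} as a direct specialization of Theorem~\ref{T1}, using homogeneity to eliminate the riskless asset. First I would invoke Theorem~\ref{T1}: since a contingent claim function is by Definition~\ref{D5.0} a portfolio generating function, it generates the portfolio $\p$ for $X_0,X_1,\ldots,X_n$ with weights given by~\eqref{3.5} and drift given by~\eqref{3.6}. The weights in~\eqref{3.15} are literally the $\p_i$ of~\eqref{3.5}, so nothing need be shown for those. The entire content of the proposition is that the $X_0$-dependent term in~\eqref{3.6} vanishes, so that $\p_0\equiv0$ and the drift collapses to~\eqref{3.16}, which by the last clause of Definition~\ref{D3} means $\V$ generates $\p$ for $X_1,\ldots,X_n$ alone.

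The key step is therefore to show $\sumi\p_i(t)=1$, equivalently $\p_0\equiv0$. This is exactly where homogeneity of degree~$1$ enters. Applying Lemma~\ref{L1} with $k=1$ to the function $x\mapsto\V(x,t)$ for each fixed $t$ gives $\V(x,t)=\sumi x_i D_i\V(x,t)$. Evaluating at $x=X(t)$ and dividing by $\V(X(t),t)$ yields
\begin{equation*}
\sumi\frac{X_i(t)D_i\V(X(t),t)}{\V(X(t),t)}=1,
\end{equation*}
that is, $\sumi\p_i(t)=1$ by~\eqref{3.15}. Consequently the factor $\big(1-\sumi X_i(t)D_i\V(X(t),t)/\V(X(t),t)\big)$ multiplying $\g_0(t)$ in~\eqref{3.6} is identically zero, so the interest-rate term drops out and~\eqref{3.6} reduces to~\eqref{3.16}. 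Moreover $\p_0(t)=1-\sumi\p_i(t)=0$ by~\eqref{3.5a}.

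Finally I would note that since $\p_0\equiv0$, the riskless asset $X_0$ carries no weight, so the generation relation~\eqref{3.4} holds with the reduced drift~\eqref{3.16} and involves only $X_1,\ldots,X_n$; by the final sentence of Definition~\ref{D3}, $\V$ generates $\p$ for $X_1,\ldots,X_n$, as claimed. I do not expect any serious obstacle here: the result is essentially a corollary of Theorem~\ref{T1}, and the only substantive ingredient is Euler's identity from Lemma~\ref{L1}. The one point requiring a small amount of care is that the hypotheses of Theorem~\ref{T1} genuinely apply --- namely that a contingent claim function is indeed a portfolio generating function in the sense of Definition~\ref{PG1}, so that the boundedness condition on $x_i D_i\V(x,t)/\V(x,t)$ is available --- but this is immediate from Definition~\ref{D5.0}.
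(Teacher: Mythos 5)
Your proof is correct and follows exactly the paper's own argument: invoke Theorem~\ref{T1}, then apply Lemma~\ref{L1} with $k=1$ (Euler's identity) to show the weights in~\eqref{3.15} sum to one, so the $\g_0$ term in~\eqref{3.6} vanishes and the drift reduces to~\eqref{3.16}. The paper states this in a single sentence; your write-up simply spells out the same steps in more detail.
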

\begin{proof}
By Lemma~\ref{L1} with $k=1$, the $\p_i(t)$ in~\eqref{3.15} sum to~1, so the last term in~\eqref{3.6} of Theorem~\ref{T1} vanishes.
\end{proof}

Note the contrast between this proposition, in which a contingent claim function generates a portfolio for $X_1, \ldots, X_n$, with Theorem~\ref{T1}, in which a portfolio generating function generates a portfolio for $X_0, X_1, \ldots, X_n$. We also have a partial converse to Proposition \ref{C0}.
 
\begin{prop}\label{C00} Let  $\V\colon\R^{+n}\times[0,T]\to\R^+$ be a portfolio generating function. If for all price processes $X_1,\ldots,X_n$ the portfolio $\p$ that $\V$ generates  satisfies $\p_0\equiv0$, then $\V$ is a contingent claim function. 
\end{prop}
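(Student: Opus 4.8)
The plan is to translate the hypothesis $\p_0\equiv0$ into Euler's identity \eqref{2.4} and then invoke Lemma~\ref{L1}. Since $\V$ is a portfolio generating function, Theorem~\ref{T1} tells us that whenever $\V$ generates a portfolio $\p$ for price processes $X_1,\ldots,X_n$ (with a riskless asset $X_0$ adjoined, which enters only the drift \eqref{3.6} and not the risky weights), the weights are $\p_i(t)=X_i(t)D_i\V(X(t),t)/\V(X(t),t)$ and $\p_0(t)=1-\sumi\p_i(t)$. The assumption that $\p_0\equiv0$ is therefore exactly the statement that $\sumi\p_i(t)=1$, i.e.\ that
\[
\sumi X_i(t)D_i\V(X(t),t)=\V(X(t),t)
\]
holds identically for every admissible choice of price processes and every $t\in[0,T)$.

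The key step is to exploit the phrase ``for all price processes.'' I would specialize to the simplest possible processes: for a fixed point $x=(x_1,\ldots,x_n)\in\R^{+n}$, take the constant deterministic processes $X_i\equiv x_i$, obtained from \eqref{1.0} with $\g_i\equiv0$ and $\zeta_{i,\ell}\equiv0$. These are strictly positive continuous semimartingales, hence admissible, and they satisfy $X(t)=x$ for all $t$. Feeding this choice into the displayed identity yields
\[
\sumi x_i D_i\V(x,t)=\V(x,t),
\]
and since $x\in\R^{+n}$ was arbitrary, this pointwise Euler identity holds on all of $\R^{+n}\times[0,T)$.

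It then remains to apply Lemma~\ref{L1} with $k=1$ to the $C^1$ (indeed $C^2$) function $\V(\,\cdot\,,t)$ for each fixed $t\in[0,T)$, concluding that $\V(\,\cdot\,,t)$ is homogeneous of degree~1; that is, $\V(cx,t)=c\,\V(x,t)$ for all $c\in\R^+$, $x\in\R^{+n}$, and $t\in[0,T)$. Homogeneity at the terminal time $t=T$ then follows by letting $t\uparrow T$ and using continuity of $\V$ on $\R^{+n}\times[0,T]$. Thus $\V$ is homogeneous of degree~1, and being a portfolio generating function it is by Definition~\ref{D5.0} a contingent claim function. I expect the only genuine subtlety to be the justification that constant paths are legitimate price processes realizing an arbitrary prescribed value $x$ at every time; once that observation is in place the rest is a direct reading of Theorem~\ref{T1} together with Euler's theorem, requiring no computation.
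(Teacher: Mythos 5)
Your proof is correct and takes essentially the same route as the paper: the paper likewise reads Euler's identity $\sum_{i=1}^n x_i D_i\V(x,t)=\V(x,t)$ off the condition $\p_0\equiv0$, using the fact that the values $X(t)$ range over all of $\R^{+n}$ as the price processes vary, and then applies Lemma~\ref{L1} with $k=1$. Your explicit construction of constant price processes realizing an arbitrary $x\in\R^{+n}$, and your continuity argument extending homogeneity to $t=T$, merely make rigorous two points the paper leaves implicit.
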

\begin{proof} Since $X(t)=(X_1(t),\ldots,X_n(t))$ spans $\R^{+n}$ for $t\in[0,T]$, if $\p_0\equiv0$ in~\eqref{3.5} and~\eqref{3.5a} then for $(x,t)\in\R^{+n}\times[0,T)$,
\[
\sumi\frac{x_iD_i\V(x,t)}{\V(x,t)}=\sumi\frac{X_i(t)D_i\V(X(t),t)}{\V(X(t),t)}=\sumi \p_i(t)=1,
\]
 so $\V$ is homogeneous of degree~1  by Lemma~\ref{L1}.
\end{proof}

Propositions~\ref{C0} and~\ref{C00} show that homogeneity of degree~1 is precisely the condition required for a portfolio generating function in order to obviate  the addition of a riskless asset $X_0$ to the portfolio it generates. \citet{Merton:1973} and \citet{HN:2001,HN2:2001} provide economic rationale for homogeneity of degree~1 in option-pricing functions, but here homogeneity of degree~1 is the mathematical condition that ensures that the $\g_0$ term vanishes from~\eqref{3.6}, and hence that the riskless asset $X_0$ vanishes from the portfolio $\p$.

The original theory of  portfolio generating functions defined them as functions of the market weights $\m_1, \ldots, \m_n \in \D^n$,  as in~\eqref{3.10}. Many examples of such generating functions exist, and it is useful to extend these functions to portfolio generating functions in the current setting of Definition~\ref{PG1}. As it happens, these original portfolio generating functions defined on $\m_1, \ldots, \m_n$ can be extended to contingent claim functions defined on $X_1, \ldots, X_n$. 

\begin{lem}\label{L2} Let $V\colon\R^{+n}\to\R^+$ be $C^1$ and homogeneous of degree~1, and let $x\in\R^{+n}$,  $z=x_1+\cdots+x_n$, and $\m=z^{-1}x$. Then, for $i=1,\ldots,n$,
\begin{equation*}
D_iV(x)=V(\m)+D_iV(\m)-\sumj \m_jD_jV(\m).
\end{equation*}
\end{lem}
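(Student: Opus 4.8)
The plan is to exploit homogeneity of degree~1 to rewrite $V$ in terms of its values on the unit simplex and then differentiate directly. First I would observe that since $\m=z^{-1}x$ gives $x=z\m$, and $V$ is homogeneous of degree~1, Definition~\ref{DH1} (with $c=z>0$) yields
\[
V(x)=V(z\m)=zV(\m).
\]
This is the key reduction: it expresses $V(x)$ as the product of the scalar $z=x_1+\cdots+x_n$ with the value $V(\m)$ of $V$ at the normalized point $\m$ on the unit simplex, turning the problem into a single product-and-chain-rule differentiation.

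Next I would regard $z$ and each component $\m_j$ as functions of $x\in\R^{+n}$ and differentiate the identity $V(x)=zV(\m)$ with respect to $x_i$. The elementary derivatives needed are $D_i z=1$ and
\[
D_i\m_j=D_i\Big(\frac{x_j}{z}\Big)=\frac{\d_{ij}}{z}-\frac{x_j}{z^2}=\frac{\d_{ij}-\m_j}{z}.
\]
Applying the product rule and then the chain rule to $V(\m)$ gives $D_iV(x)=V(\m)+z\sumj D_jV(\m)\,D_i\m_j$, and substituting $z\,D_i\m_j=\d_{ij}-\m_j$ cancels the factor $z$, leaving $D_iV(x)=V(\m)+\sumj D_jV(\m)(\d_{ij}-\m_j)$. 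Evaluating the sum, with $\sumj D_jV(\m)\d_{ij}=D_iV(\m)$, produces exactly the asserted formula $D_iV(x)=V(\m)+D_iV(\m)-\sumj\m_jD_jV(\m)$.

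I expect the only delicate point to be the chain-rule bookkeeping through the normalization $\m=z^{-1}x$, in particular getting the $-\m_j/z$ term in $D_i\m_j$ right; everything else is a routine cancellation of $z$. As a cross-check, or even an alternative route, I would note that differentiating the homogeneity identity $V(cx)=cV(x)$ in $x_i$ shows each $D_iV$ is homogeneous of degree~0, so $D_iV(x)=D_iV(\m)$; Lemma~\ref{L1} with $k=1$ then gives $V(\m)=\sumj\m_jD_jV(\m)$, whence the combination $V(\m)-\sumj\m_jD_jV(\m)$ vanishes and the stated identity reduces to $D_iV(x)=D_iV(\m)$. Either way the computation is short, and the value of the lemma lies less in its difficulty than in the bridge it provides to the original simplex-based weight formulas such as~\eqref{3.9}.
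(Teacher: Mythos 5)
Your main argument is correct and is essentially identical to the paper's own proof: the same reduction $V(x)=zV(\m)$ via homogeneity, the same derivatives $D_i z=1$ and $D_i\m_j=z^{-1}(\d_{ij}-\m_j)$, and the same product/chain-rule computation. Your cross-check is also valid and worth noting: since $D_iV$ is homogeneous of degree~0 and Euler's identity (Lemma~\ref{L1} with $k=1$) gives $V(\m)=\sumj\m_jD_jV(\m)$, the stated formula collapses to $D_iV(x)=D_iV(\m)$, which explains why the lemma holds and why it meshes with the simplex-based weights in~\eqref{3.9}.
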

\begin{proof}   $V$ is homogeneous of degree~1, so for $x\in\R^{+n}$,  $z=x_1+\cdots+x_n$, and $\m=z^{-1}x$,
\begin{equation}\label{3.111a}
V(x)=V(z\m)=zV(\m).
\end{equation}
 Then the calculations 
\[
\dd{z}{x_i}=1,\quad\text{and}\quad \dd{\m_j}{x_i}=\dd{}{x_i}(x_jz^{-1})=z^{-1}(\d_{ij}-\m_j),\vspace{5pt}
\]
for $i,j=1,\ldots,n$, give us
\[
\dd{}{x_i}V(\m)=\sumj D_jV(\m)\dd{\m_j}{x_i}=z^{-1}\Big(D_iV(\m)-\sumj \m_jD_jV(\m)\Big),
\]
for $i=1,\ldots,n$. If we differentiate both sides of~\eqref{3.111a}, we have 
\begin{align*}
D_iV(x)&=\dd{}{x_i}V(x)=\dd{}{x_i}\big(zV(\m)\big)\notag\\
&= V(\m)+z\dd{}{x_i}V(\m)\notag\\
&=V(\m)+D_iV(\m)-\sumj \m_jD_jV(\m),
\end{align*}
for $i=1,\ldots,n$. 
\end{proof}

\begin{prop}\label{P3} Suppose that for an open neighborhood $U$ of $\D^n\subset\R^n$ the function $\S\colon U\times[0,T]\to\R^+$ is $C^{2,1}$ such that $x_iD_i\log\S(x,t)$ is bounded for $i=1,\ldots,n$ and $(x,t)\in\D^n\times[0,T]$. Let $\SS\colon\R^{+n}\times[0,T]\to\R^+$ be defined by
\begin{equation}\label{3.111}
\SS(x,t)=z\S(z^{-1}x,t)=z\S(\m,t),
\end{equation}
for $x\in\R^{+n}$, $t\in[0,T]$, $z=x_1+\cdots+x_n$, and $\m=z^{-1}x\in\D^n$. Then $\SS$ is a contingent claim function, $\S$ and $\SS$ generate the same portfolio, and for price processes $X_1,\ldots,X_n$ the drift processes $\Phi_\SS$ of~(\ref{3.4}) and $\Theta_\S$ of~(\ref{3.10}) are equal.
\end{prop}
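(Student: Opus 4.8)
The plan is to verify the three assertions in turn, using Lemma~\ref{L2} and Proposition~\ref{C0} to push everything down to the derivatives of $\S$ restricted to the simplex. Homogeneity of degree~$1$ for $\SS$ is immediate: for $c\in\R^+$, scaling $x$ by $c$ scales $z$ by $c$ but leaves $\m=z^{-1}x$ fixed, so $\SS(cx,t)=cz\,\S(\m,t)=c\,\SS(x,t)$. To see that $\SS$ is a portfolio generating function in the sense of Definition~\ref{PG1}, I would note that $x\mapsto z^{-1}x$ is smooth on $\R^{+n}$ and maps into $\D^n\subset U$, so $\SS$ inherits the $C^{2,1}$ property (and continuity up to $t=T$) from $\S$. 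For the boundedness condition I would first record, by the same computation as in the proof of Lemma~\ref{L2} applied to $\SS(x,t)=z\S(\m,t)$,
\[
D_i\SS(x,t)=\S(\m,t)+D_i\S(\m,t)-\sumj\m_jD_j\S(\m,t),
\]
where $D_j\S(\m,t)$ is the spatial derivative of $\S$ evaluated at the simplex point $\m$. Since $x_i=z\m_i$ and $\SS(x,t)=z\S(\m,t)$, this gives
\[
\frac{x_iD_i\SS(x,t)}{\SS(x,t)}=\m_i+\m_iD_i\log\S(\m,t)-\m_i\sumj\m_jD_j\log\S(\m,t).
\]
Each term on the right is bounded on $\R^{+n}\times[0,T)$ because $\m_i\le1$ and $\m_iD_i\log\S(\m,t)$ is bounded on $\D^n\times[0,T]$ by hypothesis, so $\SS$ is a contingent claim function.

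For the second claim I would apply Proposition~\ref{C0} to $\SS$. Evaluating the first-derivative formula at $X(t)$, so that $z=Z(t)$ and $\m=\m(t)$, the weight $\p_i(t)=X_i(t)D_i\SS(X(t),t)/\SS(X(t),t)$ collapses to
\[
\p_i(t)=\Big(D_i\log\S(\m(t),t)+1-\sumj\m_j(t)D_j\log\S(\m(t),t)\Big)\m_i(t),
\]
which is exactly the weight~\eqref{3.9} generated by $\S$ in the original theory; hence $\S$ and $\SS$ generate the same portfolio.

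The equality of drift processes is the substantive step. The time-derivative terms match at once: since $z$ and $\m$ do not depend on $t$, $D_t\SS(x,t)=zD_t\S(\m,t)$ and $\SS(x,t)=z\S(\m,t)$, so $D_t\SS/\SS=D_t\S/\S$ after evaluation at $X(t)$. For the second-order term I would differentiate the first-derivative formula once more, again using $\partial\m_l/\partial x_j=z^{-1}(\d_{jl}-\m_l)$; the zeroth-order pieces cancel, leaving
\[
D_{ij}\SS(x,t)=z^{-1}\Big(D_{ij}\S(\m,t)-\sum_k\m_kD_{ik}\S(\m,t)-\sum_k\m_kD_{kj}\S(\m,t)+\sum_{k,l}\m_k\m_lD_{kl}\S(\m,t)\Big).
\]
Because $\SS(x,t)=z\S(\m,t)$ and $x_ix_j=z^2\m_i\m_j$ at $X(t)$, the powers of $z$ cancel and the contraction $\sum_{ij}\frac{D_{ij}\SS}{\SS}X_iX_j\s_{ij}$ equals $\frac{1}{\S(\m,t)}\sum_{ij}\m_i\m_j\s_{ij}$ times the parenthesized expression above.

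Separately I would relate $\t_{ij}$ to $\s_{ij}$: from $\log\m_i=\log X_i-\log Z$ and the fact that the martingale part of $d\log Z$ is $\sum_k\m_k\,d\log X_k$, the cross variations give
\[
\t_{ij}=\s_{ij}-\sum_k\m_k\s_{ik}-\sum_k\m_k\s_{jk}+\sum_{k,l}\m_k\m_l\s_{kl}.
\]
Substituting this into the target drift $\sum_{ij}\frac{D_{ij}\S(\m,t)}{\S(\m,t)}\m_i\m_j\t_{ij}$ and expanding, a term-by-term comparison (relabelling indices and using $\s_{ij}=\s_{ji}$ and $D_{ij}\S=D_{ji}\S$) shows that the four resulting sums coincide with the four produced by the $\SS$-contraction above; the second-order terms therefore agree. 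Combined with the matching time terms and $\Phi_\SS(0)=\Theta_\S(0)=0$, this yields $\Phi_\SS=\Theta_\S$. I expect the main obstacle to be precisely this last identity: carrying out the second derivative of the composed function correctly and then recognizing that the symmetric $\t$-contraction of the Hessian of $\S$ reproduces, index by index, the $\s$-contraction coming from $\SS$.
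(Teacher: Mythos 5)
Your proposal is correct, and for the first two assertions (homogeneity of degree~1, the $C^{2,1}$ and boundedness check via Lemma~\ref{L2}, and the identification of the weights of $\SS$ with the weights~\eqref{3.9} of $\S$) it coincides with the paper's own argument almost verbatim. Where you genuinely diverge is the equality of the drift processes, and here the comparison is instructive. The paper never computes a second derivative: since $\S$ and $\SS$ generate the \emph{same} portfolio $\p$, each drift is by definition $d\log Z_\p$ minus the logarithmic differential of the corresponding generating-function value, so writing $\SS(X(t),t)=Z(t)\S(\m(t),t)$ and using $d\log Z_\m(t)=d\log Z(t)$ gives
\[
d\Phi_\SS(t)=d\log Z_\p(t)-d\log\big(Z(t)\S(\m(t),t)\big)=d\log\big(Z_\p(t)/Z_\m(t)\big)-d\log\S(\m(t),t)=d\Theta_\S(t)
\]
in one display; the change of numeraire is absorbed exactly by the factor $z$ in~\eqref{3.111}, and only the definitions~\eqref{3.4} and~\eqref{3.10} are needed. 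You instead verify that the two \emph{explicit} drift formulas, \eqref{3.16} for $\SS$ and the quoted formula~\eqref{3.11} for $\S$, agree term by term: this requires the Hessian of the composed function, the cross-variation identity $\t_{ij}=\s_{ij}-\sum_k\m_k\s_{ik}-\sum_k\m_k\s_{jk}+\sum_{k,l}\m_k\m_l\s_{kl}$, and a relabeling argument for the four contractions. I checked these steps and they are all sound (the zeroth-order pieces do cancel in $D_{ij}\SS$, the powers of $z$ cancel as you say, and the $D_t$ terms match trivially), so your proof is complete. What your route buys is an independent consistency check between the two published drift formulas and it makes the mechanism visible at the level of the Hessian; what it costs is length and dependence on formula~\eqref{3.11}, which the paper's proof never needs. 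The paper's route is shorter and more robust --- it would survive even in settings where the generating function is not twice differentiable, since it only compares the two definitional decompositions against the common value process $Z_\p$.
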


\begin{proof}  Since $\S$ is $C^{2,1}$, it follows from~\eqref{3.111} that $\SS$ is also $C^{2,1}$. For $c\in\R^+$,
\begin{equation}\label{3.112}
\SS(cx,t)=cz\S((cz)^{-1}cx,t)=cz\S(z^{-1}x,t)=c\SS(x,t),
\end{equation}
for $x\in\R^{+n}$ and $t\in[0,T]$, so $\SS$ is homogeneous of degree~1. For $(\m,t)\in\D^n\times[0,T]$ we have  $\SS(\m,t)=\S(\m,t)$, so by Lemma~\ref{L2},
\[
D_i\SS(x,t)=\S(\m,t)+D_i\S(\m,t)-\sumj\m_jD_j\S(\m,t),
\]
for $i=1,\ldots,n$, and $t\in[0,T]$. By~\eqref{3.5}, the weight processes  for $\SS$ are defined by
\begin{align*}
\frac{x_iD_i\SS(x,t)}{\SS(x,t)}&=\frac{x_iD_i\SS(x,t)}{z\S(\m,t)}\\
&=\m_i\Big(1+ D_i\log\S(\m,t)-\sumj\m_jD_j\log\S(\m,t)\Big),
\end{align*}
for $i=1,\ldots,n$, and $t\in[0,T]$, which are  equal to the weight processes for $\S$ in~\eqref{3.9}, so $\SS$ is a contingent claim function, and $\S$ and $\SS$ generate the same portfolio $\p$. Hence, for price processes $X_1,\ldots,X_n$, the term  $d\log Z_\p(t)$ in~\eqref{3.10} will be  equal to the corresponding term in~\eqref{3.4}, so
\begin{align*}
d\Phi_\SS(t)&=d\log Z_\p(t) -d\log\SS(X(t),t)\\
&=d\log Z_\p(t)-d\log\big(Z(t)\S(\m(t),t)\big)\\
&=d\log\big(Z_\p(t)/Z(t)\big)-d\log\S(\m(t),t)\\
&=d\Theta_\S(t),
\end{align*}
for $t\in[0,T]$.
\end{proof}

The Gibbs-Shannon entropy function $\S\colon\D^n\to \R^+$, defined by
\begin{equation}\label{3.12a}
\S(x)=-\sumi x_i\log x_i,
\end{equation}
for  $x\in\D^n$, served as a prototype for portfolio generating functions in \citet{F:div}, Definition~4.2. For a market $X_1,\ldots,X_n$, the entropy function $\S$  generates the portfolio $\p$ with weight processes 
\begin{equation}\label{3.12}
\p_i(t)=-\frac{\m_i(t)\log \m_i(t)}{\S(\m(t))},
\end{equation}
for $i=1,\ldots,n$ and $t\in[0,T]$, where the $\m_i$ are the market weight processes (see also \citet{F:pgf}, Example~4.2, and \citet{F:2002}, Section~2.3). In the following example we apply \eqref{3.111} to calculate the extension of $\S$ to a contingent claim function $\SS$, and we can see that these two functions generate the same portfolio for price processes $X_1,\ldots,X_n$.

\begin{exa}\label{Ex-ent} {\em The  extended entropy function.} For $\S$ as in~\eqref{3.12a}, let us write $\S(x,t)=\S(x)$, for $x\in\D^n$ and $t\in[0,T]$. Then, following Proposition~\ref{P3}, we extend $\S$ to a contingent claim function $\SS$ as in~\eqref{3.111}. For $x\in\R^{+n}$, $z=x_1+\cdots+x_n$, and $\m=z^{-1}x$, we have $\m\in\D^n$ and 
\[
\SS(x,t)=z\S(\m,t)=-z\sumi\m_i\log\m_i=z\log z-\sumi x_i\log x_i,
\]
for $t\in[0,T]$. Then, for price processes $X_1,\ldots,X_n$, the {\em extended entropy function} $\SS$ generates the portfolio $\p$  with weights
\[
\p_i(t)=\frac{X_i(t)D_i \SS(X(t),t)}{\SS(X(t),t)}=\frac{X_i(t)\big(\log Z(t)-\log X_i(t)\big)}{\SS(X(t),t)}=-\frac{\m_i(t)\log\m_i(t)}{\S(\m(t),t)},
\]
for $i=1,\ldots,n$, and $t\in[0,T]$, and this agrees with~\eqref{3.12}, as required by Proposition~\ref{P3}. \qed
\end{exa}

\section{Replicability} \label{cc}

The concept of {\em replicability} was introduced by \citet{Merton:1973}, where he considered combinations of an asset, an option on the asset, and a riskless bond (see the paragraph that starts at the bottom of page~168 of \citet{Merton:1973}). Although it was not mentioned by name, replicability was central to the option-pricing model of \citet{Black/Scholes}, where they showed that the derivative of their pricing function determined the amount of an asset needed to hedge --- or replicate --- the option (see equation~(14) of \citet{Black/Scholes}). Here we extend these concepts to portfolio generating functions and contingent claim functions, and then in Section~\ref{BS} below we return to the Black-Scholes model to apply these results.

Replicability was originally considered in the context of arbitrage-free markets, however, arbitrage might exist in our setting  (see \citet{F:2002}, Section~3.3, \citet{FKR:2018}, Section~5, or \citet{KK:2021}, Section~2.2.2). Accordingly, here we consider replicability {\em per se,} with no connection to arbitrage. The idea of replicability is that the value of a replicable portfolio generating function $\V$ will at all times be the same as the value $Z_\p$ of the portfolio $\p$ it generates.  The replicability of a portfolio generating function depends not only on the function itself, but also on the price processes that comprise the portfolio it generates, and we must define it accordingly.
   
\begin{defn}\label{D5b} Let $X_1,\ldots,X_n$ be price processes, let $X_0$ be a riskless asset, let $\V\colon\R^{+n}\times[0,T]\to\R^+$ be a portfolio generating function, let $\p$ be the portfolio $\V$ generates, and let $Z_\p$ be the corresponding value process. Then $\V$ is  {\em replicable} for $X_0,X_1,\ldots,X_n$ if
\begin{equation}\label{4.0}
d\log\V(X(t),t)=d\log Z_\p(t),
\end{equation}
for  $t\in[0,T)$. Similarly, a contingent claim function $\V$ is {\em replicable} for $X_1,\ldots,X_n$ if it satisfies \eqref{4.0}. 
\end{defn} 
 
\begin{prop}\label{prep}  Let $X_1,\ldots,X_n$ be price processes, let $X_0$ be a riskless asset, let $\V\colon\R^{+n}\times[0,T]\to\R^+$ be a portfolio generating function, and let $\Phi_\V$ be the corresponding drift process. Then $\V$ is replicable for $X_0,X_1,\ldots,X_n$ if and only if
\begin{equation}\label{rep}
\Phi_\V(t)=0,
\end{equation}
for  $t\in[0,T)$. 
 \end{prop}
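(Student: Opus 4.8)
The plan is to obtain the equivalence directly by comparing the two governing definitions, since nothing beyond cancellation is needed. By Definition~\ref{D3}, the statement that $\V$ generates $\p$ is \emph{exactly} the generating identity~\eqref{3.4}, namely $d\log Z_\p(t) = d\log\V(X(t),t) + d\Phi_\V(t)$ for $t\in[0,T)$, with $\Phi_\V$ of finite variation and $\Phi_\V(0)=0$. Meanwhile, Definition~\ref{D5b} declares $\V$ replicable for $X_0,X_1,\ldots,X_n$ precisely when the identity~\eqref{4.0} holds, namely $d\log\V(X(t),t) = d\log Z_\p(t)$ for $t\in[0,T)$. So both sides of the desired equivalence are, by definition, conditions on the same two differentials.

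First I would subtract the generating identity~\eqref{3.4} from the replicability identity~\eqref{4.0} (equivalently, match their right-hand sides). The $d\log Z_\p(t)$ and $d\log\V(X(t),t)$ terms cancel, leaving that replicability is equivalent to $d\Phi_\V(t)=0$ for all $t\in[0,T)$. The only remaining step is to pass from this differential statement to the stated form $\Phi_\V(t)=0$. Here I would invoke the two structural properties that Definition~\ref{D3} builds into $\Phi_\V$: it is of finite variation and satisfies $\Phi_\V(0)=0$. From $d\Phi_\V\equiv 0$ on $[0,T)$ one concludes $\Phi_\V$ is constant there, and the initial condition $\Phi_\V(0)=0$ then forces $\Phi_\V(t)=0$ for all $t\in[0,T)$. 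The reverse implication is immediate, since $\Phi_\V\equiv0$ trivially has vanishing differential.

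I do not expect any genuine analytic obstacle: the proof is pure bookkeeping, resting entirely on unwinding Definitions~\ref{D3} and~\ref{D5b}. The one point deserving a word of care is the last passage, where the substantive ingredient is exactly the hypothesis that $\Phi_\V$ is a finite-variation process vanishing at the origin --- this is what upgrades ``$\Phi_\V$ has zero increment'' to ``$\Phi_\V$ is identically zero,'' ruling out a nonzero constant. With that observation in place the two conditions are equivalent, which is what was to be shown.
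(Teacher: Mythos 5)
Your proof is correct and is exactly the argument the paper intends: the paper's proof simply states that the claim ``follows directly from Definitions~\ref{D3} and \ref{D5b},'' and your write-up is precisely that unwinding, with the only substantive step (using finite variation and $\Phi_\V(0)=0$ to upgrade $d\Phi_\V\equiv 0$ to $\Phi_\V\equiv 0$ on $[0,T)$) correctly identified and handled.
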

\begin{proof} This follows directly from Definitions~\ref{D3} and \ref{D5b}.
\end{proof}

Note that \eqref{rep} also applies to contingent claim functions since they are portfolio generating functions. 
  
\begin{cor}\label{P2a}  Let  $\V\colon\R^{+n}\times[0,T]\to\R^+$ be a portfolio generating function, let $X_1,\ldots,X_n$ be price processes, and let $X_0$ be a riskless asset. Then $\V$ is replicable for $X_0,X_1,\ldots,X_n$ if and only if
 \begin{equation}\begin{split}\label{repgen}
\half\sumij \s_{ij}(t)X_i(t)X_j(t)&D_{ij}\V(X(t),t)+D_t\V(X(t),t)\\&+\g_0(t)\Big(\sumi X_i(t)D_i\V(X(t),t)-\V(X(t),t)\Big)=0,
\end{split}\end{equation}
for  $t\in[0,T)$. 
 \end{cor}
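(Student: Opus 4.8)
The plan is to read the result off directly from Proposition~\ref{prep} together with the drift formula~\eqref{3.6} of Theorem~\ref{T1}, so that essentially all the content is algebraic rearrangement. By Proposition~\ref{prep}, $\V$ is replicable for $X_0,X_1,\ldots,X_n$ if and only if $\Phi_\V(t)=0$ for all $t\in[0,T)$. Because $\Phi_\V(0)=0$ and, by~\eqref{3.6}, $\Phi_\V$ is absolutely continuous with an explicitly given $dt$-density, the identity $\Phi_\V\equiv0$ holds precisely when that density vanishes for (almost) every $t\in[0,T)$; this is the step that converts a statement about the integrated finite-variation process into the pointwise differential equation~\eqref{repgen}.

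First I would set the coefficient of $dt$ in~\eqref{3.6} equal to zero, obtaining
\[
-\half\sumij\frac{D_{ij}\V(X(t),t)}{\V(X(t),t)}X_i(t)X_j(t)\s_{ij}(t)-\frac{D_t\V(X(t),t)}{\V(X(t),t)}+\g_0(t)\Big(1-\sumi\frac{X_i(t)D_i\V(X(t),t)}{\V(X(t),t)}\Big)=0.
\]
Since $\V(X(t),t)>0$, I would then multiply through by $-\V(X(t),t)$ to clear every denominator. Distributing the interest-rate term turns $-\g_0\big(\V-\sumi X_iD_i\V\big)$ into $\g_0\big(\sumi X_iD_i\V-\V\big)$, and collecting the remaining two terms reproduces exactly the left-hand side of~\eqref{repgen}. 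Because multiplication by the strictly positive quantity $\V(X(t),t)$ is reversible, each manipulation is an equivalence, so the argument runs identically in both directions and no separate converse is required.

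The only point needing genuine care, as opposed to the routine sign bookkeeping, is the passage between the vanishing of the process $\Phi_\V$ and the vanishing of its $dt$-density; this relies on $\Phi_\V(0)=0$ together with the absolute continuity that is manifest in~\eqref{3.6}. I expect this to be immediate given the structure already in place, so the corollary amounts to a restatement of~\eqref{3.6} under the replicability condition $\Phi_\V\equiv0$.
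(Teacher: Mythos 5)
Your proposal is correct and follows essentially the same route as the paper, which proves the corollary by combining Proposition~\ref{prep} with the drift formula~\eqref{3.6} of Theorem~\ref{T1}; you simply make explicit the algebra (multiplying the vanishing $dt$-density by $-\V(X(t),t)>0$) that the paper leaves implicit. Your remark about passing from $\Phi_\V\equiv0$ to the (almost-everywhere) vanishing of its density is a fair point of care, but it is the same step the paper takes tacitly.
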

\begin{proof} This follows from Proposition~\ref{prep} and~\eqref{3.6} of Theorem~\ref{T1}.
 \end{proof}
 
Equation~\eqref{repgen} becomes exactly the (generalized) {\em Black-Scholes equation} for the price of a European option when $\g_0=r$, the riskless interest rate (see \citet{Black/Scholes}, equation~(7)). The original Black-Scholes equation, which is given in~\eqref{5.1} below, had constant coefficients, whereas in~\eqref{repgen} the coefficients are stochastic. Similar multivariate Black-Scholes equations have appeared  in, e.g.,  \citet{Duffie:2001}, \citet{Carmona2006}, and \citet{Guillaume2019}.

Corollary~\ref{P2a} can be simplified for contingent claim functions, since there is no need for the riskless asset~$X_0$.

\begin{cor}\label{P1} Let   $\V\colon\R^{+n}\times [0,T]\to\R^+$ be a contingent claim function and let $X_1,\ldots,X_n$ be price processes. Then $\V$ is replicable for $X_1,\ldots,X_n$ if and only if
\begin{equation}\label{4.3}
 \half\sumij \s_{ij}(t)X_i(t)X_j(t)D_{ij}\V(X(t),t)+D_t\V(X(t),t)=0,
 \end{equation}
 for $t\in[0,T)$.
 \end{cor}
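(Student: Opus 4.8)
The plan is to reduce the statement to the drift characterization of replicability already available, exactly as Corollary~\ref{P2a} is reduced to Theorem~\ref{T1}. First I would invoke Proposition~\ref{prep}, which (as the remark following it records, since a contingent claim function is a portfolio generating function) shows that $\V$ is replicable for $X_1,\ldots,X_n$ if and only if its drift process satisfies $\Phi_\V(t)=0$ for $t\in[0,T)$.

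Next I would substitute the explicit drift. Because $\V$ is homogeneous of degree~1, Proposition~\ref{C0} applies and gives $d\Phi_\V(t)$ as in~\eqref{3.16}, with no $\g_0$ term present. This is the only place homogeneity is used: by Euler's theorem (Lemma~\ref{L1} with $k=1$) the generated weights sum to~1, so $\p_0\equiv0$ and the interest-rate contribution appearing in~\eqref{3.6} of Theorem~\ref{T1} disappears, which is precisely what distinguishes~\eqref{4.3} from the general condition~\eqref{repgen} of Corollary~\ref{P2a}.

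I would then clear denominators. Since $\Phi_\V(0)=0$ and $\Phi_\V$ is a finite-variation process whose increments are an integral against $dt$, the condition $\Phi_\V(t)=0$ for all $t\in[0,T)$ holds if and only if the $dt$-density in~\eqref{3.16} vanishes. Multiplying that density through by the strictly positive factor $\V(X(t),t)$ (positivity is part of the definition of a contingent claim function) and negating yields exactly~\eqref{4.3}.

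The argument is essentially routine once Propositions~\ref{C0} and~\ref{prep} are in hand, so I do not anticipate a genuine obstacle. The only point requiring mild care is the passage from $\Phi_\V\equiv0$ to the pointwise vanishing of the integrand, which is legitimate because the drift vanishes together with its $dt$-density. As a consistency check, one could instead begin from Corollary~\ref{P2a} and note that the term $\g_0(t)\big(\sumi X_i(t)D_i\V(X(t),t)-\V(X(t),t)\big)$ there vanishes identically by Lemma~\ref{L1}, recovering~\eqref{4.3} by a second route.
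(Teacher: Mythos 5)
Your proposal is correct and uses essentially the same ingredients as the paper: the paper proves the corollary in one line by starting from Corollary~\ref{P2a} and observing that Lemma~\ref{L1} makes the $\g_0$ term in~\eqref{repgen} vanish, which is precisely the route you list as a ``consistency check.'' Your primary route (Proposition~\ref{prep} plus the drift formula~\eqref{3.16} of Proposition~\ref{C0}, then clearing the positive denominator $\V$) recombines the same facts in a slightly different order and is equally valid.
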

\begin{proof}  Lemma~\ref{L1} implies that for a contingent claim function the $\g_0$ term in \eqref{repgen} vanishes.
 \end{proof}
 
Equation~\eqref{4.3}  is equivalent to~(2.15) in \citet{HN:2001}, where homogeneity of degree~1 is required of all option-pricing functions. Equation~(2.15) of \citet{HN:2001} is used to show that the replicating portfolio for a contingent claim is self-financing, whereas in our setting all portfolios defined by~\eqref{1.1.1} are self-financing.
 
As we mentioned above, replicability depends on the price processes as well as the generating function. The following examples present some replicable and non-replicable portfolio generating functions, and they show that replicability depends as much on the nature of the processes $X_1,\ldots,X_n$ as on the form of the function $\V$. The first example presents a contingent claim function that is a generalization of the geometric mean of the price processes $X_1, \ldots, X_n$. This function is replicable only for price processes with constant covariance processes. 

\begin{exa}\label{X1} {\em A conditionally replicable contingent claim function.} 
For real constants $p_1,\ldots,p_n$, with $p_1+\cdots+p_n=1$, let $\V\colon\R^{+n}\times[0,T]\to\R^+$ be the $C^{2,1}$ function defined by 
\[
\V(x,t)=x_1^{p_1}\cdots x_n^{p_n},
\]
for $(x,t) \in \R^{+n} \times [0,T]$. This function is based on the portfolio generating function $\S$ in Example~3.1.6(4), in \citet{F:2002}. We see that $\V$ is homogeneous of degree~1, so it is a contingent claim function and for  price processes $X_1,\ldots,X_n$ it generates the portfolio $\p$ with weights
\[
\p_i(t)=\frac{X_i(t)D_i\V(X(t),t)}{\V(X(t),t)}=\frac{X_i(t)\big(p_iX_1^{p_1}(t)\cdots X_i^{p_i-1}(t)\cdots X_n^{p_n}(t)\big)}{X_1^{p_1}(t)\cdots X_n^{p_n}(t)}=p_i,
\]
for $i=1,\ldots,n$ and $t\in[0,T]$. Here $\p$ is a {\em constant-weighted} portfolio,  and by  Theorem~\ref{T1} we have the decomposition~\eqref{3.4} with drift process given by
\begin{align}
d\Phi_\V(t)&=\half\bigg(\sumi p_i\s_{ii}(t)-\sumij p_i p_j\s_{ij}(t)\bigg)dt\label{4.4.0}\\
&=\g_\p^*(t)dt,\label{4.4.1}
\end{align}
for $t\in[0,T]$. If, for example, $p_i>0$, for $i=1,\ldots,n$, and the covariance matrix $\big(\s_{ij}(t)\big)_{\{i,j=1,\ldots,n\}}$ is nonsingular, then $\g^*_\p(t)>0$ (see \citet{F:2002}, Proposition~1.3.7), so $d\Phi_\V(t)>0$,  and, by Corollary~\ref{P1}, $\V$ is not replicable. 

It would be convenient to define $\Vt\colon\R^{+n}\times[0,T]\to\R^+$ by
\begin{equation} \label{4.4}\begin{split}
\Vt(x,t) &= \V(x,t)e^{\Phi_\V(t)}\\
 & = \V(x,t)\exp\Big(\intt\g_\p^*(s)ds\Big),
\end{split}\end{equation}
for $(x,t)\in\R^{+n}\times[0,T]$, which would compensate for the term~\eqref{4.4.1}. However, $\Vt$ must be a function, and in general $\Phi_\V$ might be a random process rather than a function, in which case $\Vt$ will not be well defined. However, if the covariance processes $\s_{ij}$ in~\eqref{4.4.0} are constant, then $\g^*_\p(t)=\g^*_\p(0)$ for $t\in[0,T]$, so
\begin{equation}\label{4.4.2}
\Vt(x,t)=\V(x,t)e^{\g_\p^*(0) t},
\end{equation}
for $(x,t)\in\R^{+n}\times[0,T]$, and $\Vt$ is a $C^{2,1}$ function that generates the same portfolio as $\V$. Hence, for $t\in[0,T]$, 
\begin{align*}
d\log Z_\p(t)&=d\log\V(X(t),t)+d\Phi_\V(t)\\
&=d\log\V(X(t),t)+\g^*_\p(0)dt\\
&=d\log\Vt(X(t),t),
\end{align*}
by~\eqref{4.4.2}, so  $\Vt$ is replicable for price processes with constant covariances. We should note here that although we changed the contingent claim function from $\V$ to $\Vt$ in~\eqref{4.4.2}, the portfolio weights $\p_i\equiv p_i$, for $i=1,\ldots,n$, are the same for both functions. \qed
\end{exa}

\begin{rem} 
Although~\eqref{4.4} does not always produce a replicable contingent claim function, if we could obtain a reasonably constant approximation of $\g^*_\p(t)$, then we could use that to estimate $\Vt(x,t)$ according to~\eqref{4.4}. This could be done, for example, by assuming that $\g^*_\p(t) = \g^*_\p(0)$, or by estimating $\g^*_\p(t)$ directly using~\eqref{1.3}. More generally, an estimate of a non-zero drift process $\Phi_\V$ together with the construction $\Vt = \V e^{\Phi_\V}$ might be useful in the practice of option pricing, where parameters are often not precise, but must be estimated using past data and models for the volatility of the underlying assets.
\qed
\end{rem}

The following two examples examine a contingent claim function similar to the $L^p$-norm and describe how this function is not replicable in its general form, but can be made replicable in a special case with $\s_{ij} = 0$ for $i \neq j$.

\begin{exa}\label{X2}  {\em A non-replicable contingent claim function.}  For $p\in(0,1)$ let
\begin{equation}\label{4.4.3}
\V(x,t)=\big(x_1^p+\cdots+x_n^p\big)^{1/p},
\end{equation}
for $(x,t) \in \R^{+n} \times [0,T]$. This function is based on the measure of diversity $\DD_p$ in Example~3.4.4 of \citet{F:2002}. We see that $\V$ is $C^{2,1}$ and homogeneous of degree~1, so it is a contingent claim function and for price processes $X_1,\ldots,X_n$ it generates the portfolio $\p$ with weights
\[
\p_i(t)=\frac{X_i(t)D_i\V(X(t),t)}{\V(X(t),t)}=\frac{X_i^p(t)}{X_1^{p}(t)+\cdots +X_n^{p}(t)},
\]
for $i=1,\ldots,n$  and $t\in[0,T]$. In this case the decomposition~\eqref{3.4} has drift process given by
\[
d\Phi_\V(t)=(1-p)\g_\p^*(t)dt,
\]
for $t\in[0,T]$ (see \citet{F:2002}, Example~3.4.4), so as in Example~\ref{X1} $\V$ is not replicable. By modifying $\V$ as in~\eqref{4.4} we have 
\[
\Vt(x,t)=\V(x,t)\exp\Big((1-p)\intt\g_\p^*(s)ds\Big),
\]
for $(x,t)\in\R^{+n}\times[0,T]$, but since the portfolio weights $\p_i$ are variable, $\g^*_\p$ will not be constant even if all the covariances are. Because $\g^*_\p$ is a process rather than a function, this modification $\Vt$ fails to create a replicable contingent claim function. \qed
\end{exa}

Although in Example~\ref{X2} we cannot use the simple adjustment of Example~\ref{X1}, nevertheless with constant covariances the contingent claim function $\V$ of~\eqref{4.4.3} can be modified to make it replicable. This modification can be constructed by using a standard method for solving Black-Scholes type equations. The method consists in  using a logarithmic transformation to convert~\eqref{4.3} into a heat equation, and then finding a convolutional solution to this heat equation (see \citet{Evans:2010}, Section~2.3, Theorem~1). In the following example we carry out this modification of $\V$ in~\eqref{4.4.3}  for the simple case of $p=1/2$ and $\s_{ij}=0$ for $i\ne j$.

\begin{exa}\label{X4a}  {\em A replicable version of Example~\ref{X2}.}  Let $p=1/2$ and let
\[
d\log X_i(t)=\g_i(t)dt+ \s_i dW_i(t),
\]
for  $i=1,\ldots,n$ and $t\in[0,T]$, as in~\eqref{1.0} with constant variance rates $\s^2_i>0$. In this case~\eqref{4.3} will be
\begin{equation}\label{4.9}
\dd{}{t}\V(x,t)+\half\sumi \s_i^2 x^2_i \dd{^2}{x_i^2}\V(x,t)=0,
\end{equation}
for $(x,t)\in\R^{+n}\times[0,T]$,  and $\V \colon \R^{+n} \times [0,T] \to \R^+$ is $C^{2,1}$. We wish to solve this equation with the terminal condition
\begin{equation}\label{4.9a}
\V(x,T)=\big(x_1^{1/2}+\cdots+x_n^{1/2}\big)^2,
\end{equation}
for $x\in\R^{+n}$, which is the same function as in~\eqref{4.4.3} for $p=1/2$.  

The solution to~\eqref{4.9} with terminal condition~\eqref{4.9a} is
\begin{equation}\label{4.9b}
\V(x,t)= \sumi x_i+\sum^n_{\substack{i,j=1\\ i\ne j}}e^{\s_i^2(t-T)/8} e^{\s_j^2(t-T)/8}x_i^{1/2}x_j^{1/2},
\end{equation}
for $(x,t)\in\R^{+n}\times[0,T]$, and the derivation of this solution appears in Appendix~\ref{appendix}. $\V$ generates the portfolio $\p$ with weights
\[
\p_i(t)=\frac{X_i(t)+ e^{\s_i^2(t-T)/8}X^{1/2}_i(t)\sum^n_{\substack{j=1\\ \hspace{-25pt}j\ne i}} e^{\s_j^2(t-T)/8}X_j^{1/2}(t)}{\V(X(t),t)},
\]
for $t\in[0,T]$ and $i=1,\ldots,n$. Since~\eqref{4.9} is satisfied, the portfolio $\p$ satisfies~\eqref{4.3}, and $\V$ is replicable for $X_1,\ldots,X_n$. This example is like a ``backward'' problem in that the terminal value function is the same as in Example~\ref{X2}, i.e., with $p=1/2$ in~\eqref{4.4.3}, but the weights are different. This contrasts with Example~\ref{X1}, which is like a ``forward'' problem in that the weights remain the same but the contingent claim function changes.
 \qed
\end{exa}

\section{Portfolio generating functions and contingent claim functions} \label{GFCCF}

It is common to require that option-pricing functions be homogeneous of degree 1, so that these functions exist within the class of contingent claim functions \citep{Merton:1973,HN:2001}. In this section we show how to transform an inhomogeneous portfolio generating function $\V\colon\R^{+n}\times[0,T]\to\R^+$ into a contingent claim function $\Vh\colon\R^+\times\R^{+n}\times[0,T]\to\R^+$, which is homogeneous of degree~1 by definition. This method will be applied in the next section to find contingent claim functions for option-pricing problems with inhomogeneous terminal value functions. The method consists in first dividing each variable $x_i$ by a new variable $x_0 \in \R^+$, which is equivalent to discounting the asset prices $X_i$ by a riskless asset $X_0$. This converts the original function into a function that is homogeneous of degree 0 for $(x_0, x)\in\R^+\times\R^{+n}$, and then this new function is multiplied by $x_0$, which raises the function to be homogeneous of degree 1. We now show that a portfolio generating function transformed in this manner satisfies the criteria for a contingent claim function. 

\begin{prop}\label{T2} Let $\V\colon\R^{+n}\times[0,T]\to\R^+$ be a portfolio generating function and let  $\Vh\colon\R^+\times\R^{+n}\times[0,T]\to\R^+$ be defined by
\begin{equation}\label{8.0.0} 
\Vh(x_0,x,t)=x_0\V(x_0^{-1}x,t),
\end{equation}
for $(x_0,x,t)\in\R^+\times\R^{+n}\times[0,T]$. Then $\Vh$ is a contingent claim function.
\end{prop}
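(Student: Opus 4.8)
The plan is to check the two conditions in the definition of a contingent claim function (Definition~\ref{D5.0}): that $\Vh$, viewed as a function of the $n+1$ price variables $x_0,x_1,\ldots,x_n$, is a portfolio generating function in the sense of Definition~\ref{PG1}, and that it is homogeneous of degree~1. Homogeneity is the cheapest part and I would dispatch it first: for any $c\in\R^+$,
\[
\Vh(cx_0,cx,t)=cx_0\,\V\big((cx_0)^{-1}cx,t\big)=cx_0\,\V(x_0^{-1}x,t)=c\,\Vh(x_0,x,t),
\]
so $\Vh$ is homogeneous of degree~1. The regularity requirements are equally routine: the map $(x_0,x)\mapsto x_0^{-1}x$ is smooth from $\R^+\times\R^{+n}$ into $\R^{+n}$, so $\V(x_0^{-1}x,t)$ is $C^{2,1}$ on $\R^+\times\R^{+n}\times[0,T)$ because $\V$ is, and multiplication by the smooth factor $x_0$ preserves this; continuity up to $t=T$ is inherited from $\V$.

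The real content is the boundedness of the logarithmic derivatives $x_jD_j\Vh/\Vh$ for $j=0,1,\ldots,n$, the remaining condition for $\Vh$ to be a portfolio generating function. I would set $y=x_0^{-1}x\in\R^{+n}$ and compute the derivatives by the chain rule. For $i=1,\ldots,n$, using $\partial y_k/\partial x_i=\d_{ik}/x_0$, the factor $x_0$ cancels and one finds $D_i\Vh(x_0,x,t)=D_i\V(y,t)$, so that
\[
\frac{x_iD_i\Vh(x_0,x,t)}{\Vh(x_0,x,t)}=\frac{y_iD_i\V(y,t)}{\V(y,t)}.
\]
For the $x_0$-derivative I would use the product rule together with $\partial y_k/\partial x_0=-y_k/x_0$ to get $D_0\Vh(x_0,x,t)=\V(y,t)-\sumj y_jD_j\V(y,t)$, and therefore
\[
\frac{x_0D_0\Vh(x_0,x,t)}{\Vh(x_0,x,t)}=1-\sumj\frac{y_jD_j\V(y,t)}{\V(y,t)}.
\]
As $(x_0,x)$ ranges over $\R^+\times\R^{+n}$ the point $y$ ranges over all of $\R^{+n}$, and since $\V$ is a portfolio generating function each $y_jD_j\V(y,t)/\V(y,t)$ is bounded on $\R^{+n}\times[0,T)$; boundedness of all $n+1$ logarithmic derivatives of $\Vh$ then follows, the $j=0$ case being one minus a finite sum of bounded terms.

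The only step that needs care is the $x_0$-derivative: one has to carry the factors $\partial y_k/\partial x_0=-y_k/x_0$ through the chain rule and notice that they assemble precisely into the combination $\sumj y_jD_j\V(y,t)$ that the portfolio-generating-function hypothesis on $\V$ controls. Once that identity is in hand the result reduces entirely to the boundedness already assumed for $\V$.
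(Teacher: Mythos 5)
Your proposal is correct and follows essentially the same route as the paper: verify homogeneity of degree~1 by direct computation, note that regularity is inherited from $\V$, and establish boundedness of the weights $x_iD_i\Vh/\Vh$ for $i=1,\ldots,n$ via the chain rule, observing that $x_0^{-1}x$ spans $\R^{+n}$. The single difference is the $x_0$-weight: you compute $D_0\Vh(x_0,x,t)=\V(y,t)-\sumj y_jD_j\V(y,t)$ explicitly by the product and chain rules, whereas the paper deduces boundedness of $x_0D_0\Vh/\Vh$ from Euler's identity (Lemma~\ref{L1}) applied to the degree-1 homogeneous function $\Vh$; both yield the identical expression $1-\sumj y_jD_j\V(y,t)/\V(y,t)$, so this is a cosmetic rather than substantive deviation.
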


\begin{proof} It follows from \eqref{8.0.0} that $\Vh$ is $C^{2,1}$ on $\R^+\times\R^{+n}\times[0,T)$, and for $a\in\R^+$,
\[
\Vh(ax_0,ax,t)=ax_0\V((ax_0)^{-1}ax,t)=ax_0\V(x_0^{-1}x,t)=a\Vh(x_0,x,t),
\]
for $(x_0,x,t)\in\R^+\times\R^{+n}\times[0,T]$, so $\Vh$ is homogeneous of degree~1. To show that $\Vh$ is a contingent claim function, we must show that its weight processes are bounded. From~\eqref{8.0.0}  we see that for $(x_0,x,t)\in\R^+\times\R^{+n}\times[0,T)$,
\begin{align}
D_i\Vh(x_0,x,t)&=\dd{}{x_i}\Vh(x_0,x,t)=\dd{}{x_i}\big(x_0\V(x_0^{-1}x,t)\big)\notag\\
&=x_0\dd{x_0^{-1}x_i}{x_i}D_i\V(x_0^{-1}x,t)\label{8.0.1}\\
&=D_i\V(x_0^{-1}x,t),\notag
\end{align}
for $i=1,\ldots,n$, where~\eqref{8.0.1} results from an application of the chain rule as in~\eqref{1.6} and the fact that $D_i\V$ is the partial derivative of $\V$ with respect to its $i$th argument. Hence, for $(x_0,x,t)\in\R^+\times\R^{+n}\times[0,T)$, we have
\begin{equation}\label{8.1.0}
\frac{x_i D_i\Vh(x_0,x,t)}{\Vh(x_0,x,t)}=\frac{x_iD_i\V(x_0^{-1}x,t)}{x_0\V(x_0^{-1}x,t)}=\frac{x_0^{-1}x_iD_i\V(x_0^{-1}x,t)}{\V(x_0^{-1}x,t)},
\end{equation}
for  $i=1,\ldots,n$. Since terms of the form $x_0^{-1}x\in\R^{+n}$ span $\R^{+n}$, we see that $x_iD_i\Vh(x_0,x,t)/\Vh(x_0,x,t)$ is bounded on $\R^+\times\R^{+n}\times[0,T)$ if and only if $x_iD_i\V(x,t)/\V(x,t)$ is bounded on $\R^{+n}\times[0,T)$, which it is since $\V$ is a portfolio generating function. Since $\Vh$ is homogeneous of degree~1, for $(x_0,x,t)\in\R^+\times\R^{+n}\times[0,T]$, 
\[
\sum_{i=0}^n \frac{x_i D_i\Vh(x_0,x,t)}{\Vh(x_0,x,t)}=1,
\]
by Lemma~\ref{L1}, and this implies that $x_0D_0\Vh(x_0,x,t)/\Vh(x_0,x,t)$ is also bounded. Hence, $\Vh$ is a contingent claim function.
\end{proof}

This proposition provides a method of converting a portfolio generating function $\V\colon\R^{+n}\times[0,T]\to\R^+$ into a contingent claim function  $\Vh\colon\R^+\times\R^{+n}\times[0,T]\to\R^+$. Although the domains of definition for these two functions differ, both functions generate portfolios for the same price processes $X_1,\ldots,X_n$ and  riskless asset $X_0$. Proposition~\ref{T2} depends only on mathematical arguments, not economic arguments; it first induces homogeneity of degree~0, and then reverses the process to raise the resulting function to homogeneity of degree~1. The mathematical nature of Proposition~\ref{T2}   is in contrast to \citet{HN:2001}, where economic arguments are presented to convert an ostensibly inhomogeneous function  into a modified function  that is homogeneous of degree~1 (see equations~(2.4) and~(2.5) in Section~2.1  of \citet{HN:2001}). The versatility of Proposition~\ref{T2} is illustrated in Example~\ref{XA3} below.

We would like to show that the conversion~\eqref{8.0.0} preserves replicability, and Lemmata~\ref{L5} and~\ref{L6} leading to Proposition~\ref{P10} below accomplish this. In the next section, these results will allow us to find contingent claim functions that solve the option-pricing problem even for inhomogeneous terminal value functions. We first consider replicability of $\V$ and $\Vh$ for $1,X_1,\ldots,X_n$, where the riskless asset $X_0\equiv1$.

\begin{lem}\label{L5}  Let $\V\colon\R^{+n}\times[0,T]\to\R^+$ be a portfolio generating function, let $\Vh\colon\R^+\times\R^{+n}\times[0,T]\to\R^+$ be the contingent claim function defined by~\eqref{8.0.0}, and let $X_1,\ldots,X_n$ be price processes. Then $\Vh$  is replicable for $1,X_1,\ldots,X_n$ if and only if $\V$ is replicable for $1,X_1,\ldots,X_n$.
\end{lem}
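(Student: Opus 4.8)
The plan is to translate each of the two replicability statements into its governing partial differential equation via Corollaries~\ref{P2a} and~\ref{P1}, and then to verify that evaluating the equation for $\Vh$ along the slice $x_0 = 1$ returns exactly the equation for $\V$. Since the riskless asset here is $X_0 \equiv 1$, its interest rate process vanishes, $\g_0 \equiv 0$, so by Corollary~\ref{P2a} the function $\V$ is replicable for $1, X_1, \ldots, X_n$ if and only if
\begin{equation*}
\half\sumij \s_{ij}(t) X_i(t) X_j(t) D_{ij}\V(X(t), t) + D_t\V(X(t), t) = 0,
\end{equation*}
for $t \in [0,T)$, the $\g_0$ term in~\eqref{repgen} having disappeared. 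On the other side, $\Vh$ is a contingent claim function of the $n+1$ variables $(x_0, x_1, \ldots, x_n)$, so I would apply Corollary~\ref{P1} to the $n+1$ price processes $X_0 = 1, X_1, \ldots, X_n$: the function $\Vh$ is replicable for $1, X_1, \ldots, X_n$ if and only if
\begin{equation*}
\half\sum_{i,j=0}^n \s_{ij}(t) X_i(t) X_j(t) D_{ij}\Vh(X_0(t), X(t), t) + D_t\Vh(X_0(t), X(t), t) = 0,
\end{equation*}
for $t \in [0,T)$.

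Next I would reduce the second equation to the first. Because $X_0 \equiv 1$ is a riskless asset, all of its covariances vanish, $\s_{i0} = \s_{0j} \equiv 0$, so every term in the double sum carrying an index $0$ drops out and the range collapses to $i, j = 1, \ldots, n$; in particular I never need to compute $D_{00}\Vh$ or the mixed $D_{0j}\Vh$. It then remains to evaluate the surviving derivatives along $x_0 = 1$. Starting from~\eqref{8.0.1}, which already gives $D_i\Vh(x_0, x, t) = D_i\V(x_0^{-1}x, t)$ for $i = 1, \ldots, n$, one more application of the chain rule yields $D_{ij}\Vh(x_0, x, t) = x_0^{-1}D_{ij}\V(x_0^{-1}x, t)$, and differentiating~\eqref{8.0.0} in $t$ gives $D_t\Vh(x_0, x, t) = x_0 D_t\V(x_0^{-1}x, t)$. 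Setting $x_0 = 1$, so that $x_0^{-1}x = x$, collapses both to $D_{ij}\Vh(1, x, t) = D_{ij}\V(x, t)$ and $D_t\Vh(1, x, t) = D_t\V(x, t)$.

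Substituting these identities, together with $X_0(t) = 1$, turns the $\Vh$-equation into the $\V$-equation term by term, so the two replicability criteria are identical and the stated equivalence follows. I expect the main obstacle to be purely bookkeeping: keeping straight that the riskless asset enters the two corollaries through different mechanisms --- as the vanishing $\g_0$ term in Corollary~\ref{P2a} for the inhomogeneous $\V$, but as the vanishing covariances $\s_{0j}$ in the homogeneous framework of Corollary~\ref{P1} for $\Vh$ --- and confirming that both reductions land on precisely the same equation. No delicate estimates are needed; the boundedness of the weight processes, and hence the status of $\Vh$ as a genuine contingent claim function, was already secured in Proposition~\ref{T2}.
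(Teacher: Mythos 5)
Your proof is correct and follows essentially the same route as the paper's: both translate the two replicability statements into their Black--Scholes-type PDEs via Corollary~\ref{P2a} (with $\g_0\equiv0$) and Corollary~\ref{P1}, discard the index-$0$ terms using $\s_{i0}=\s_{0j}\equiv0$, and identify the remaining derivatives of $\Vh$ at $x_0=1$ with those of $\V$. The only cosmetic difference is that you obtain the derivative identities by the chain rule applied to~\eqref{8.0.0}, whereas the paper reads them off from the function identity $\V(x,t)=\Vh(1,x,t)$; the substance is identical.
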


\begin{proof} 
From~\eqref{8.0.0} we have
\[
 x_0\V(x_0^{-1}x,t)=\Vh(x_0,x,t)= x_0\Vh(1,x_0^{-1}x,t),
\]
for $(x_0, x, t)\in\R^+\times\R^{+n}\times[0,T]$, since $\Vh$ is homogeneous of degree~1. Hence,
\begin{equation*}\label{8.1.2}
\V(x,t)=\Vh(1,x,t),
\end{equation*}
for $(x,t)\in\R^{+n}\times[0,T]$, so the corresponding partial derivatives for $\V$ and $\Vh(1,\,\cdot\,,\,\cdot\,)$ agree. For the riskless asset $X_0\equiv1$, the interest rate process $\g_0\equiv0$, so~\eqref{repgen} becomes
\[
\half\sum_{i,j=1}^n \s_{ij}(t)X_i(t)X_j(t)D_{ij}\V(X(t),t)+D_t\V(X(t),t)=0,
\]
 for $t\in[0,T)$. Since the corresponding partial derivatives  for $\V$ and $\Vh(1,\,\cdot\,,\,\cdot\,)$ agree, this equation will be satisfied if and only if 
\begin{equation*}
 \half\sumij \s_{ij}(t)X_i(t)X_j(t)D_{ij}\Vh(1,X(t),t)+D_t\Vh(1,X(t),t)=0,
 \end{equation*}
  for $t\in[0,T)$. Since $\s_{i0}=\s_{0j}\equiv0$, for $i,j=0,\ldots,n$, we can rewrite this equation as
 \begin{equation*}
 \half\sum_{i,j=0}^n\s_{ij}(t)X_i(t)X_j(t)D_{ij}\Vh(1,X(t),t)+D_t\Vh(1,X(t),t)=0,
\end{equation*}
 for $t\in[0,T)$, which is equivalent to~\eqref{4.3} in this case. Then Corollaries~\ref{P2a} and~\ref{P1} imply that $\Vh$ is replicable for $1,X_1,\ldots,X_n$ if and only if $\V$ is replicable for $1,X_1,\ldots,X_n$.
\end{proof}

Now we show that replicability of a contingent claim function is preserved by multiplication of the price processes $X_1,\ldots,X_n$ by a riskless asset $X_0$.

\begin{lem}\label{L6} Let $\V\colon\R^{+n}\times[0,T]\to\R^+$ be a contingent claim function, let $X_1,\ldots,X_n$ be price processes, and let $X_0$ be a riskless asset. Then $\V$ is replicable for  $X_1,\ldots,X_n$ if and only if it is replicable for  $X_0X_1,\ldots,X_0X_n$.
\end{lem}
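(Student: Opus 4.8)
The plan is to apply the replicability criterion of Corollary~\ref{P1} to each of the two families of price processes and to show that the two resulting equations differ only by a strictly positive multiplicative factor. Write $Y_i=X_0X_i$ for $i=1,\ldots,n$, so that $Y(t)=X_0(t)X(t)$, and let $\s^Y_{ij}$ denote the covariance processes associated with $Y_1,\ldots,Y_n$. Since $\V$ is a contingent claim function, Corollary~\ref{P1} characterizes replicability for each family by an equation of the form~\eqref{4.3}, and the whole argument reduces to comparing these two equations term by term.

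First I would observe that multiplication by the riskless asset leaves all covariances unchanged. Because $\log Y_i=\log X_0+\log X_i$ and $X_0$ is a riskless asset, hence of finite variation, the summand $\log X_0$ contributes nothing to any cross variation; thus $\brac{\log Y_i,\log Y_j}_t=\brac{\log X_i,\log X_j}_t$, and so $\s^Y_{ij}=\s_{ij}$ for all $i,j$ and all $t\in[0,T]$.

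Next I would record the homogeneity degrees of the derivatives of $\V$. Since $\V$ is homogeneous of degree~1, differentiating the identity $\V(cx,t)=c\V(x,t)$ in $x_i$ shows that each $D_i\V$ is homogeneous of degree~$0$; a further differentiation in $x_j$ shows that each $D_{ij}\V$ is homogeneous of degree~$-1$; and differentiation in $t$ shows that $D_t\V$ is homogeneous of degree~$1$. Taking $c=X_0(t)$ and evaluating at $Y(t)=X_0(t)X(t)$ then gives $D_{ij}\V(Y(t),t)=X_0^{-1}(t)\,D_{ij}\V(X(t),t)$ and $D_t\V(Y(t),t)=X_0(t)\,D_t\V(X(t),t)$.

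Finally, substituting these relations together with $Y_i(t)Y_j(t)=X_0^2(t)X_i(t)X_j(t)$ and $\s^Y_{ij}=\s_{ij}$ into the criterion~\eqref{4.3} written for $Y_1,\ldots,Y_n$, I would find that each term acquires exactly one net factor of $X_0(t)$, so that the left-hand side equals $X_0(t)$ times the left-hand side of~\eqref{4.3} written for $X_1,\ldots,X_n$. Since $X_0(t)>0$, the two equations vanish simultaneously, and Corollary~\ref{P1} yields the asserted equivalence. There is no genuine obstacle here; the only points requiring care are the correct bookkeeping of the homogeneity degrees of the first, second, and time derivatives, and the observation that the finite-variation factor $X_0$ leaves all covariance processes invariant.
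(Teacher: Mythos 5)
Your proposal is correct and follows essentially the same route as the paper's proof: track the homogeneity degrees of $D_{ij}\V$ (degree $-1$) and $D_t\V$ (degree $1$), substitute into the criterion~\eqref{4.3} with $c=X_0(t)$, and factor out the strictly positive $X_0(t)$. Your explicit observation that the finite-variation factor $X_0$ leaves the covariance processes unchanged ($\s^Y_{ij}=\s_{ij}$) is a point the paper uses only implicitly, so if anything your write-up is slightly more complete.
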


\begin{proof} Since $\V$ is a contingent claim function, it is homogeneous of degree~1, so for $c\in\R^+$ and $(x,t)\in\R^{+n}\times[0,T]$,
\begin{equation*}
D_i\V(cx,t)=\dd{}{cx_i}\V(cx,t)=c^{-1}\dd{}{x_i}\big(c\V(x,t)\big)=D_i\V(x,t),
\end{equation*}
for $i=1,\ldots,n$, and hence,
\begin{equation*}
D_{ij}\V(cx,t)=\dd{}{cx_j}D_i\V(cx,t)=c^{-1}\dd{}{x_j}D_i\V(x,t)=c^{-1}D_{ij}\V(x,t),
\end{equation*}
for $i,j=1,\ldots,n$, as well as
\begin{equation*}
D_t\V(cx,t)=D_tc\V(x,t)=cD_t\V(x,t).
\end{equation*}
 If follows that
\begin{align*}
 \half\sumij \s_{ij}(t) & X_0(t)X_i(t)X_0(t)X_j(t)D_{ij}\V(X_0(t)X(t),t)+D_t\V(X_0(t)X(t),t)\\
&= \half\sumij \s_{ij}(t)X_0(t)X_i(t)X_j(t)D_{ij}\V(X(t),t)+X_0(t)D_t\V(X(t),t)\\
&=X_0(t)\bigg(\half\sumij \s_{ij}(t)(t)X_i(t)X_j(t)D_{ij}\V(X(t),t)+D_t\V(X(t),t)\bigg),
 \end{align*}
 for $t\in[0,T]$. Since $X_0>0$, Corollary~\ref{P1} implies that $\V$ is replicable for $X_0X_1,\ldots,X_0X_n$ if and only if it  is replicable for $X_1,\ldots,X_n$.
\end{proof}

Lemmata~\ref{L5} and~\ref{L6} can be combined to prove the following proposition, which we use for option pricing in the next section.

\begin{prop}\label{P10} Let  $\V\colon\R^{+n}\times[0,T]\to\R^+$ be a portfolio generating function, let $X_1,\ldots,X_n$ be price processes, let $X_0$ be a riskless asset, and let $\Vh\colon\R^+\times\R^{+n}\times[0,T]\to\R^+$ be the contingent claim function defined by~\eqref{8.0.0}.   Then $\Vh$ is replicable for $X_0,X_1,\ldots,X_n$ if and only if $\V$ is replicable  for $1,X_0^{-1}X_1,\ldots,X_0^{-1}X_n$.
\end{prop}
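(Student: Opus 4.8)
The plan is to chain together Lemmata~\ref{L5} and~\ref{L6}, which were clearly assembled for exactly this purpose. The target statement relates replicability of $\Vh$ for $X_0,X_1,\ldots,X_n$ to replicability of $\V$ for $1,X_0^{-1}X_1,\ldots,X_0^{-1}X_n$, and the two lemmata supply precisely the two elementary moves needed: Lemma~\ref{L5} trades $\Vh$ for $\V$ (at the level of the riskless asset normalized to~$1$), and Lemma~\ref{L6} lets us multiply or divide the price processes of a contingent claim function by a riskless asset without affecting replicability.

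First I would apply Lemma~\ref{L6} to the contingent claim function $\Vh$ itself. Since $\Vh$ is a contingent claim function by Proposition~\ref{T2}, and since $X_0$ is a riskless asset, Lemma~\ref{L6} (with the roles of the price processes played by $X_0^{-1}X_1,\ldots,X_0^{-1}X_n$ and the riskless multiplier played by $X_0$) gives that $\Vh$ is replicable for $X_0\cdot X_0^{-1}X_1,\ldots,X_0\cdot X_0^{-1}X_n$ if and only if it is replicable for $X_0^{-1}X_1,\ldots,X_0^{-1}X_n$. But $X_0\cdot X_0^{-1}X_i = X_i$, so this reads: $\Vh$ is replicable for $X_1,\ldots,X_n$ (as a contingent claim function, i.e.\ with no separate riskless asset) if and only if $\Vh$ is replicable for $X_0^{-1}X_1,\ldots,X_0^{-1}X_n$. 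The one point to check is the bookkeeping connecting replicability of $\Vh$ ``for $X_0,X_1,\ldots,X_n$'' in the sense of Corollary~\ref{P2a} with replicability of $\Vh$ as a contingent claim function in the sense of Corollary~\ref{P1}; because $\Vh$ is homogeneous of degree~1 the $\g_0$ term in~\eqref{repgen} vanishes, so the two notions coincide and the $(n+1)$-asset and $n$-asset formulations agree.

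Second I would invoke Lemma~\ref{L5}, which states that $\Vh$ is replicable for $1,X_1,\ldots,X_n$ if and only if $\V$ is replicable for $1,X_1,\ldots,X_n$. Applying this with the price processes $X_0^{-1}X_1,\ldots,X_0^{-1}X_n$ in place of $X_1,\ldots,X_n$ yields that $\Vh$ is replicable for $1,X_0^{-1}X_1,\ldots,X_0^{-1}X_n$ if and only if $\V$ is replicable for $1,X_0^{-1}X_1,\ldots,X_0^{-1}X_n$. Composing this with the equivalence from the previous paragraph --- and again using that, for the homogeneous $\Vh$, replicability for $1,X_0^{-1}X_1,\ldots,X_0^{-1}X_n$ is the same as replicability for the discounted processes $X_0^{-1}X_1,\ldots,X_0^{-1}X_n$ without an adjoined riskless asset --- closes the chain and delivers the claimed equivalence.

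The main obstacle is not any hard estimate but the careful tracking of \emph{which} collection of assets each replicability statement refers to, and the repeated passage between the ``with a riskless asset $X_0$'' formulation of Corollary~\ref{P2a} and the ``contingent claim, no $X_0$'' formulation of Corollary~\ref{P1}. The substitution $X_i \mapsto X_0^{-1}X_i$ in Lemma~\ref{L5} must be justified by noting that these are again legitimate price processes (strictly positive continuous semimartingales), and that the riskless asset in that lemma's hypothesis is the constant~$1$ throughout; I would state this substitution explicitly to avoid any ambiguity. Once the identifications of asset collections are pinned down, the proof is a two-line composition of the two lemmata.
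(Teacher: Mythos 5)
Your overall plan---chaining Lemmata~\ref{L5} and~\ref{L6}---is exactly the paper's, and your application of Lemma~\ref{L5} (with the discounted processes $X_0^{-1}X_1,\ldots,X_0^{-1}X_n$ in place of $X_1,\ldots,X_n$) coincides with the paper's first step. But your application of Lemma~\ref{L6} is not well-formed, and this is a genuine gap. The function $\Vh\colon\R^+\times\R^{+n}\times[0,T]\to\R^+$ has $n+1$ price arguments, so Lemma~\ref{L6} can only be applied to $\Vh$ together with a list of $n+1$ price processes; you apply it with the $n$ processes $X_0^{-1}X_1,\ldots,X_0^{-1}X_n$, and the resulting intermediate statements---``$\Vh$ is replicable for $X_1,\ldots,X_n$'' and ``$\Vh$ is replicable for $X_0^{-1}X_1,\ldots,X_0^{-1}X_n$''---are undefined: there is nothing to put in the first argument slot of $\Vh$, so the value $\Vh(X(t),t)$ and the weights that $\Vh$ would generate cannot even be written down. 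Your proposed patch (homogeneity of $\Vh$ makes the $\g_0$ term of~\eqref{repgen} vanish, so the ``$(n+1)$-asset'' and ``$n$-asset'' formulations agree) does not repair this: that observation equates replicability of $\Vh$ for $n+1$ \emph{given} processes in the sense of Corollary~\ref{P1} with replicability in the sense of Corollary~\ref{P2a} after adjoining yet another riskless asset, but it cannot give meaning to replicability of an $(n+1)$-variable function for only $n$ processes. The same defect recurs in your closing sentence, where you identify replicability for $1,X_0^{-1}X_1,\ldots,X_0^{-1}X_n$ with replicability for $X_0^{-1}X_1,\ldots,X_0^{-1}X_n$ ``without an adjoined riskless asset.''

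The repair is small and is precisely what the paper does: apply Lemma~\ref{L6} to $\Vh$ with the $n+1$ price processes $1,X_0^{-1}X_1,\ldots,X_0^{-1}X_n$ (the constant $1$ is a legitimate price process, being a riskless asset with zero interest rate) and with riskless multiplier $X_0$. The multiplied processes are then $X_0\cdot 1=X_0$ and $X_0\cdot X_0^{-1}X_i=X_i$, so Lemma~\ref{L6} says directly that $\Vh$ is replicable for $1,X_0^{-1}X_1,\ldots,X_0^{-1}X_n$ if and only if it is replicable for $X_0,X_1,\ldots,X_n$; composing this with your (correct) Lemma~\ref{L5} step finishes the proof. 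Once the constant process is kept in the list, every statement in the chain concerns $n+1$ processes matching the arity of $\Vh$, and none of the homogeneity bookkeeping you describe is needed.
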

\begin{proof} Lemma~\ref{L5} shows that $\V$ is replicable for  $1,X_0^{-1}X_1,\ldots, X_0^{-1}X_n$ if and only if $\Vh$  is replicable for  $1,X_0^{-1}X_1,\ldots, X_0^{-1}X_n$.  Lemma~\ref{L6}, with $X_1,\ldots,X_n$ replaced by $1,X_0^{-1}X_1,\ldots,X_0^{-1}X_n$, shows that $\Vh$ is  replicable for $1,X_0^{-1}X_1,\ldots, X_0^{-1}X_n$ if and only if it is replicable for  $X_0,X_1,\ldots,X_n$. 
\end{proof}

In the next section, we shall see that replicable contingent claim functions are central to option pricing, and this proposition along with~\eqref{8.0.0} will allow us to construct contingent claim functions from inhomogeneous portfolio generating functions in a manner that preserves replicability. 

\section{Option pricing and the Black-Scholes model} \label{BS}

For our purposes, the {\em option-pricing problem} for {\em European} options is: given price processes $X_1,\ldots,X_n$ and a continuous  {\em terminal value function} $f\colon \R^{+n}\to\R^+$, find either ({\em i}) a contingent claim function $\V\colon\R^{+n}\times[0,T]\to\R^+$ that is replicable for $X_1,\ldots,X_n$ with $\V(x,T)=f(x)$, for $x\in\R^{+n}$; or ({\em ii}) a contingent claim function $\Vh\colon\R^+\times\R^{+n}\times[0,T]\to\R^+$ that is replicable for $X_0,X_1,\ldots,X_n$ with $\V(1,x,T)=f(x)$, for $x\in\R^{+n}$, where $X_0$ is a riskless asset with $X_0(T)=1$.
 
Option pricing is based on the economic rationale that at all times the price of an option must equal the value of a portfolio with the same terminal value, or else an opportunity for arbitrage is created. According to this rationale, a replicable contingent claim function is equal to the value of an option with expiration payout equal to the terminal value of that function. Although  arbitrage might exist among our price processes, replicability is central to the concept of option pricing, so we require it here (see \citet{Merton:1973}).

If a terminal value function $f$ is homogeneous of degree~1, then it might be possible to find a contingent claim function that solves the option-pricing problem, as in Examples~\ref{X1} and~\ref{X4a}. However, if $f$ is not homogeneous of degree~1, then a contingent claim function cannot be used directly to solve the option-pricing problem, and we must use ({\em ii}) above. Here we derive a procedure to add a riskless asset $X_0$ to the  price processes $X_1,\ldots,X_n$ to solve the option-pricing problem for an inhomogeneous terminal value function.  

We wish to solve the option-pricing problem for the price processes $X_1,\ldots,X_n$, a riskless asset $X_0$ with $X_0(T)=1$, and the terminal value function $f\colon \R^{+n}\to\R^+$. Our procedure has three steps:

\begin{enumerate}
\item\label{s1} Solve~\eqref{repgen} with $\g_0\equiv0$ to find a portfolio generating function $\V\colon\R^{+n}\times[0,T]\to\R^+$ that is replicable for price processes $1,X_0^{-1}X_1,\ldots,X_0^{-1}X_n$ with terminal value $\V(x,T)=f(x)$, for $x\in\R^{+n}$.

\item\label{s2} Define $\Vh(x_0,x,t)=x_0\V(x_0^{-1}x,t)$, for $(x_0,x,t)\in\R^+\times\R^{+n}\times[0,T]$, as in~\eqref{8.0.0}.

\item\label{s3} Apply Proposition~\ref{P10} to show that $\Vh\colon\R^+\times\R^{+n}\times[0,T]\to\R^+$ is a  contingent claim function which is replicable  for $X_0,X_1,\ldots,X_n$ with terminal value $\Vh(1,x,T)=\V(x,T)=f(x)$, for $x\in\R^{+n}$.
\end{enumerate} 

The only difficult step here is the first, where it is necessary to solve a differential equation of the form~\eqref{repgen} with $\g_0\equiv0$. This equation is of the form of the Feynman-Kac equation (see e.g., \citet{Duffie:2001}, Chapter~5, Section~I, or Appendix~E), and can often be transformed into a heat equation (see, e.g., \citet{Evans:2010}, Section~2.3). In an application, if the price processes $X_0^{-1}X_i$  are of the same generic form as $X_i$, then the processes $1,X_0^{-1}X_1,\ldots,X_0^{-1}X_n$ in Step~1 can be replaced by  $1,X_1,\ldots,X_n$ in the differential equation~\eqref{repgen}, which in any case will have $\g_0\equiv0$. We now apply this procedure to the option-pricing model of \citet{Black/Scholes}.

The Black-Scholes model is a solution to the option-pricing problem for a European call option with a riskless asset $X_0$ of the form~\eqref{1.1a} and a price process $X$ of the form~\eqref{1.0} with 
\begin{equation}\label{11}
d\log X_0(t)=r\,dt\quad\text{ and }\quad d\log X(t)= \g(t)dt+ \s\, dW(t),
\end{equation}
for $t\in[0,T]$, where $r$ and $\s>0$ are constants. \citet{Black/Scholes} showed that  $\V$ follows the differential equation
\begin{equation}\label{5.1}
\dd{\V}{ t}+\half\s^2 x^2 \dd{^2 \V}{x^2}+rx\dd{\V}{x}-r\V=0,
\end{equation}
for $(x,t)\in\R^{+}\times[0,T)$, with terminal value
\begin{equation}\label{5.2}
f(x)= (x-K)^+,
\end{equation}
for $x\in\R^+$, where $K\in\R^+$ is a constant called the {\em strike price} of the call. (Let us note that~\eqref{5.1} is of the form~\eqref{repgen}, where a riskless asset of the form $X_0(t)=e^{r(t-T)}$ has been added to the replicating portfolio.) The solution to this option-pricing problem is well known and first appeared in \citet{Black/Scholes}, equation~(13), and today can be found in many sources. We present it here to test our three-step process on a simple example. The solution to~\eqref{5.1} with terminal value~\eqref{5.2} was shown by \citet{Black/Scholes} to be
\begin{equation}\label{6.2}
\V(x, t) = N(z_0)x-N(z_1)Ke^{r(t-T)},
\end{equation}
for $(x,t)\in\R^{+}\times[0,T)$, where $N$ is the normal cumulative distribution function with
\begin{equation}\label{6.4}
z_0=\frac{1}{\s\sqrt{T-t}} \Big(\log \big(x/K e^{r(t-T)}\big) + \s^2\big(T-t\big)/2\Big)\quad\text{  and }\quad
z_1=z_0-\s\sqrt{T-t}.
\end{equation}
In this case neither the terminal value function $f$ nor the option-pricing function $\V$ is homogeneous as a function of $x\in\R^+$. 

Since the terminal value function $f$ in \eqref{5.2} is not strictly positive, to proceed we change it to
\begin{equation}\label{10.1}
\fh(x)=f(x)+K= (x-K)^++K=x\lor K>0,
\end{equation}
for $x\in\R^+$. With this terminal value function we now proceed with the three steps given above. Note that in the \citet{Black/Scholes} setting,~\eqref{repgen} becomes~\eqref{5.1}.

\vspace{10pt}
\noindent {\em Step~\ref{s1}:} Since the price process $X_0^{-1}X$ is of the same form~\eqref{11} as $X$, we can use the same  equation~\eqref{5.1} with $r=0$. Hence, we must solve
\begin{equation}\label{5.3}
\dd{\V}{ t}+\half\s^2 x^2 \dd{^2 \V}{x^2}=0,
\end{equation}
for $(x,t)\in\R^{+}\times[0,T)$, with  $\V(x,T)=\fh(x)$. From~\eqref{6.2} and~\eqref{6.4} we see that  the solution to this equation for $(x,t)\in\R^+\times[0,T)$ is 
\begin{equation}\label{5.3a}
\V(x,t)=N(z_0)x+(1-N(z_1))K,
\end{equation}
 with
\begin{equation} \label{10.3}
z_0=\frac{1}{\s\sqrt{T-t}} \Big(\log \big(x/K\big) + \s^2\big(T-t\big)/2\Big)\quad\text{  and }\quad
z_1=z_0-\s\sqrt{T-t}.
\end{equation}
The function $\V\colon\R^+\times[0,T)$ is $C^{2,1}$ and satisfies~\eqref{5.3}, and  $xD_1\V(x,t))/\V(x,t)$ is bounded for $(x,t)\in\R^+\times[0,T)$ since 
\[
D_1\V(x,t)= N(z_0),
\]
with $z_0$ as in~\eqref{10.3} (see \citet{Black/Scholes}, equation~(14)).
 Hence, $\V$ is a portfolio generating function that is replicable for the price processes $1, X$, where $X$ has constant variance rate $\s^2$.

\vspace{10pt}
\noindent  {\em Step~\ref{s2}:} We construct the contingent claim function $\Vh\colon\R^+\times\R^+\times[0,T]\to\R^+$ defined by
\begin{equation} \label{10.4}
\Vh(x_0,x,t)= x_0\V(x_0^{-1}x,t)=N(\zh_0)x+(1-N(\zh_1))Kx_0,
\end{equation}
for $(x_0,x,t)\in\R^+\times\R^+\times[0,T)$, with 
\begin{equation} \label{10.5}
\zh_0=\frac{1}{\s\sqrt{T-t}} \Big(\log \big(x/Kx_0\big) + \s^2\big(T-t\big)/2\Big)\quad\text{  and }\quad
\zh_1=\zh_0-\s\sqrt{T-t},
\end{equation}
and $\Vh(x_0,x,T)=x_0\fh(x_0^{-1}x)$.

\vspace{10pt}
\noindent  {\em Step~\ref{s3}:} We now have a contingent claim function $\Vh$ that solves the option-pricing problem
with the terminal value function $\fh$ of~\eqref{10.1} and is replicable for $X_0,X$. Here the only restriction on the riskless asset $X_0$ is that $X_0(T)=1$, whereas with the original solution, \eqref{6.2} and~\eqref{6.4}, the riskless asset was restricted to  $X_0(t)=e^{r(t-T)}$, for $t\in[0,T]$.

\vspace{10pt}
To solve the original problem with terminal value function $f$ of~\eqref{5.2}, we can subtract $Kx_0$ from~\eqref{10.4} since it  is also a solution for~\eqref{5.3}, which is linear and homogeneous.  Hence, the option-pricing problem with terminal value $f$ of~\eqref{5.2} is solved by
\begin{equation}\label{10.2}
\Vh(x_0,x,t)-Kx_0,
\end{equation}
for $(x_0,x,t)\in\R^+\times\R^+\times[0,T]$. Although we can recover the option price in this manner, the value of the hedging portfolio at $t\in[0,T]$ will be given by $Z_\p(t)-KX_0(t)$, which is incompatible with our logarithmic representation~\eqref{1.2} since its value can reach 0 at $t=T$. While $Z_\p(t)-KX_0(t)$ cannot be represented logarithmically on $[0,T]$,  the values  $Z_\p(t)$ and $X_0(t)$ can be represented, so the difference can be calculated, although not as a single portfolio value. It is possible that the arithmetic approach to generating functions of~\citet{Karatzas/Ruf:2017} could be used directly in this case, without adding and subtracting $Kx_0$ as in~\eqref{10.1} and~\eqref{10.2}. 
 
The Black-Scholes model has been expanded to include multiple correlated assets, as in \citet{Carmona2006} and \citet{Guillaume2019}, and solutions have been found for many other similar option-pricing problems, as in, e.g., \citet{Smith:1976} and  \citet{HN:2001,HN2:2001}. We shall not review these results here, but rather as a final example we consider the option-pricing problem with a terminal value function that is clearly inhomogeneous.

\begin{exa}\label{XA3} {\em Option pricing with an inhomogeneous terminal value function.} Suppose that  $n\ge2$ and $X_1,\ldots,X_n$ are price processes that are given by
\begin{equation}\label{A3.1}
d\log X_i(t)=\g_i(t)dt+\sum_{\ell=1}^{d}\zeta_{i,\ell}dW_\ell(t),
\end{equation}
for $i=1,\ldots,n$ and $t\in[0,T]$, as in~\eqref{1.0} with constant $\zeta_{i,\ell}$, so the covariance rate processes $\s_{ij}$ will also be constant. We want to find a contingent claim function with terminal value function
 \begin{equation}\label{A.1q}
f(x)=x_1^{p_1}+\cdots+x_n^{p_n},
 \end{equation}
for $x\in\R^{+n}$, where $p_i\ne p_j\in\R$, for $1\le i\ne j\le n$.  We approach this by  following the three steps above.

As Step~1, we must solve~\eqref{repgen} with $\g_0\equiv0$ and price processes  $1,X_0^{-1}X_1,\ldots,X_0^{-1}X_n$. Since the price processes $X_0^{-1}X_i$ have the same generic form \eqref{A3.1} as the $X_i$ we can instead use $1,X_1,\ldots,X_n$ in~\eqref{repgen}, so we must solve
\begin{equation}\label{A.1r}
\dd{\V}{t}+\half\sumi \s_i^2x^2_i\dd{^2\V}{x^2_i}=0,
\end{equation}
for $(x,t)\in\R^{+n}\times[0,T]$, since  $D_{ij}\V$ vanishes for $i\ne j$. We must solve this  with terminal value $\V(x,T)=f(x)$ for $x\in\R^{+n}$. Let us  find a portfolio generating function $\V\colon\R^{+n}\times[0,T]\to\R^+$ of the form
\begin{equation}\label{A.11x}
\V(x,t)=\sumi\ph_i(t) x_i^{p_i},
\end{equation}
for $(x,t)\in\R^{+n}\times[0,T]$, where the $\ph_i$ are positive $C^2$ functions with $\ph_i(T)=1$, so that $\V(x,T)=f(x)$ for $x\in\R^{+n}$. 

For $\V$ of the form~\eqref{A.11x}, the differential equation~\eqref{A.1r} becomes
\begin{equation}\label{A.1.1}
\sumi \ph'_i(t)x_i^{p_i}+\half\sumi (p_i^2-p_i)\s_i^2\ph_i(t)x_i^{p_i}=0,
\end{equation}
for   $(x,t)\in\R^{+n}\times[0,T]$, which will be solved if
\begin{equation}\label{A.1.2}
\ph'_i(t)+(p_i^2-p_i)\s_i^2\ph_i(t)/2=0,
\end{equation}
for  $i=1,\ldots,n$ and $t\in[0,T]$. A solution of~\eqref{A.1.2} will be of the form $\ph_i(t)=c_i e^{\a_it}$, for real constants $\a_i$ and $c_i$, so $\ph'_i(t)=\a_i\ph_i(t)$, and we have
\begin{equation}\label{A.1.3}
\a_i=(p_i-p_i^2)\s_i^2/2,
\end{equation}
for $i=1,\ldots,n$. Hence, a solution of~\eqref{A.1.1} will be of the form
\begin{equation}\label{A.1.4}
\V(x,t)=\sumi c_i e^{(p_i-p_i^2)\s_i^2t/2}x_i^{p_i},
\end{equation}
for  $(x,t)\in\R^{+n}\times[0,T]$, and if we let $c_i=e^{-(p_i-p_i^2)\s_i^2T/2}$, for $i=1,\ldots,n$, then
\begin{equation}\label{A.1.5}
\V(x,t)=\sumi e^{(p_i-p_i^2)\s_i^2(t-T)/2}x_i^{p_i},
\end{equation}
for   $(x,t)\in\R^{+n}\times[0,T]$, and this satisfies~\eqref{A.1r} with the terminal value $\V(x,T)=f(x)$ for $x\in\R^{+n}$.

To solve this option-pricing problem for $X_0,X_1,\ldots,X_n$ with a riskless asset $X_0$ for which $X_0(T)=1$,  we follow Steps~\ref{s2} and~\ref{s3} above, and we have
\[
\Vh(x_0,x,t)=x_0\V(x_0^{-1}x,t)=\sumi e^{(p_i-p_i^2)\s_i^2(t-T)/2}x_0^{(1-p_i)}x_i^{p_i},
\]
for $(x_0,x,t)\in\R^+\times\R^{+n}\times[0,T]$.  We see that $\Vh$ is homogeneous of degree~1, so it is a contingent claim function that satisfies~\eqref{A.1r}.

To calculate the weights of the portfolio $\p$ generated by $\Vh$, we can  apply Proposition~\ref{C0} and we see that  
\[
\p_i(t)=\frac{X_i(t)D_i\Vh(X_0(t),X(t),t)}{\Vh(X_0(t),X(t),t)}=\frac{p_ie^{(p_i-p_i^2)\s_i^2(t-T)/2}X_0^{(1-p_i)}(t)X_i^{p_i}(t)}{\Vh(X_0(t),X(t),t)},
\]
for $i=1,\ldots,n$ and $t\in[0,T]$, with
\[
\p_0(t)=\frac{X_0(t)D_0\Vh(X_0(t),X(t),t)}{\Vh(X_0(t),X(t),t)}=1-\sumi \p_i(t),
\]
for $t\in[0,T]$. Since $\Vh$ is replicable for $X_0,X_1,\ldots,X_n$, the portfolio $\p$ with initial value $Z_\p(0)=\Vh(X_0(0),X(0),0)$  satisfies
\begin{equation}\label{A.1.6}
Z_\p(t)=\Vh(X_0(t),X(t),t)=\sumi e^{(p_i-p_i^2)\s_i^2(t-T)/2}X_0^{(1-p_i)}(t)X_i^{p_i}(t),
\end{equation}
for $t\in[0,T]$. \qed
\end{exa}

\vspace{10pt}
\noindent{\bf Acknowledgements.} The authors thank Ioannis Karatzas and Martin Schweizer for their invaluable comments and suggestions, and the second author thanks the Institute for Advanced Study for providing the inspiration to pursue this research.

\appendix
\section{Appendix}\label{appendix}

\noindent{\bf Example \ref{X4a}, continued.} {\em The derivation of (\ref{4.9b}).} For constants $\s_1,\ldots,\s_n>0$, we wish to solve the differential equation 
\begin{equation}\label{a1}
\dd{}{t}\V(x,t)+\half\sumi \s_i^2 x^2_i \dd{^2}{x_i^2}\V(x,t)=0,
\end{equation}
for $(x,t)\in\R^{+n}\times[0,T]$,  with the terminal condition
\[
\V(x,T)=\big(x_1^{1/2}+\cdots+x_n^{1/2}\big)^2,
\]
for $x\in\R^{+n}$.  

We can change variables, with
\begin{equation}\label{A10}
y_i=\log x_i +\s_i^2 t/2,\qquad\text{ or }\qquad x_i=e^{y_i}e^{-\s_i^2 t/2},
\end{equation}
 for $t\in[0,T]$ and $i=1,\ldots,n$, so $y\in\R^n$ and  $U\colon\R^n\times[0,T]\to\R^+$ defined by
\[
U(y,t)=\V(x,t),
\]
for $(x,t)\in\R^{+n}\times[0,T]$, will be $C^{2,1}$. Hence, we have
\begin{align*}
D_t\V(x,t)&=\dd{}{t}U(y,t)\\
&=\half \sumi \s_i^2D_iU(y,t)+D_t U(y,t),
\end{align*}
for $(x,t)\in\R^{+n}\times[0,T]$, and since
\begin{align*}
D_i\V(x,t)&=\dd{}{x_i}U(y ,t)\\
&=x_i^{-1} D_iU(y,t),
\end{align*}
for $(x,t)\in\R^{+n}\times[0,T]$ and  $i=1,\ldots,n$, we also have
\begin{align*}
D_{ii}V(x,t)&= \dd{}{x_i} \Big( x_i^{-1} D_i U(y,t) \Big) \\
&=-x_i^{-2}D_iU(y,t)+x_i^{-2} D_{ii}U(y,t),
\end{align*}
for $(x,t)\in\R^{+n}\times[0,T]$ and  $i=1,\ldots,n$. Hence, we can now write \eqref{a1} as
\begin{equation*}
\dd{}{t}U(y,t)+\half\sumi \s_i^2\dd{^2}{y_i^2}U(y,t)=0,
\end{equation*}
for $(y,t)\in\R^n\times[0,T]$, and if we let
\[
\t=T-t\qquad\text{ and }\qquad \U(y,\t)=U(y,t)=\V(x,t),
\]
for $(y,\t)\in\R^n\times[0,T]$ and $x\in\R^{+n}$ given by~\eqref{A10}, then we have
\begin{equation}\label{4.11}
\dd{}{\t}\U(y,\t)-\half\sumi \s_i^2\dd{^2}{y_i^2}\U(y,\t)=0,
\end{equation}
for $(y,\t)\in\R^n\times[0,T]$, which is a heat equation (see \citet{Evans:2010}, Section~2.3).

We must solve~\eqref{4.11} with the initial condition
\[
\U(y,0)=\V(x,T)=\big(x_1^{1/2}+\cdots+x_n^{1/2}\big)^2=\big(e^{y_1/2}e^{-\s_1^2T/4}+\cdots+e^{y_n/2}e^{-\s_n^2T/4 }\big)^2,
\]
for $y\in\R^n$. The solution to this initial value problem is
\begin{align*}
\U(y,\t)&=\sumi \frac{e^{-\s_i^2T/2}}{(2\p\s_i^2\t)^{1/2}}\int_{\R}e^{-(y_i-z_i)^2/2\s_i^2\t}e^{z_i}dz_i\\
&\quad+ \sum^n_{\substack{i,j=1\\i\ne j}}\bigg(\frac{e^{-\s_i^2T/4}}{(2\p\s_i^2\t)^{1/2}}\int_{\R}e^{-(y_i-z_i)^2/2\s_i^2\t}e^{z_i/2}dz_i\bigg)
\bigg(\frac{e^{-\s_j^2T/4}}{(2\p\s_j^2\t)^{1/2}}\int_{\R}e^{-(y_j-z_j)^2/2\s_j^2\t}e^{z_j/2}dz_j  \bigg)\\
&= \sumi e^{-\s_i^2(T-\t)/2}e^{y_i}+\sum^n_{\substack{i,j=1\\i\ne j}} e^{-\s_i^2\t/8}e^{-\s_j^2\t/8}\big(e^{-\s_i^2(T-\t)/4}e^{y_i/2} \big)\big(e^{-\s_j^2(T-\t)/4}e^{y_j/2} \big),
\end{align*}
for $(y,\t)\in\R^n\times[0,T]$ (see \citet{Evans:2010}, Section~2.3, Theorem~1, or \citet{Duffie:2001}, Chapter~5, Section~I). Hence,
\begin{equation*}
\V(x,t)= \sumi x_i+\sum^n_{\substack{i,j=1\\ i\ne j}}e^{\s_i^2(t-T)/8} e^{\s_j^2(t-T)/8}x_i^{1/2}x_j^{1/2},
\end{equation*}
for $(x,t)\in\R^{+n}\times[0,T]$.
 \qed

\bibliographystyle{chicago}
\bibliography{math,math1,math2,math4}

\end{document}